\documentclass[10pt,twoside,twocolumn,final]{IEEEtran}

\usepackage{float}
\usepackage{algorithm}
\usepackage[noend]{algorithmic}
\usepackage{bbm}
\usepackage{yhmath}
\usepackage{amscd}
\usepackage{dsfont}
\usepackage{amsmath,amsthm,amssymb,amsfonts}
\usepackage{epsfig}
\usepackage{graphics}
\usepackage{psfrag}
\usepackage{rotating}
\usepackage{url}
\usepackage{color}
\usepackage{balance}
\usepackage{mathtools}
\usepackage{tabularx}
\usepackage{booktabs}
\usepackage{enumitem}
\makeatletter\def\caption@documentclass{standard}\makeatother
\usepackage[labelfont=bf,format=plain,justification=raggedright,singlelinecheck=false]{caption}

\newtheorem{theorem}{Theorem}

\newtheorem{proposition}{Proposition}

\newtheorem{remark}{Remark}

\AtBeginDocument{%
  \captionsetup[figure]{labelfont=bf,font=small,position=bottom,labelsep=colon,justification=justified,singlelinecheck=off}%
}

\begin{document}
\title{Quantum Simultaneous Information and Power Transfer: Capacity-Power Trade-offs in Discrete and Continuous Channels }

 \author{Nizar Khalfet, \IEEEmembership{Member,~IEEE,} and Ioannis Krikidis, \IEEEmembership{Fellow,~IEEE}
\thanks{N. Khalfet and I. Krikidis are with Department of Electrical and Computer Engineering, University of Cyprus, Cyprus (e-mail: \{khalfet.nizar,~krikidis\}@ucy.ac.cy).}
\thanks{This work has received funding from the European Research Council (ERC)
under the European Union’s Horizon 2020 research and innovation programme (Grant agreement No. 819819) and from the
    European Union Marie Sk\l{}odowska-Curie Actions Project
    {COALESCE} under Grant~101130739. This work was supported also by the European Research Council (ERC) through the
    European Union's Horizon Europe Research and Innovation programme,
    under Grant~101241675 ({ERC PoC QUATRO}).}
}
\maketitle

\begin{abstract}
This paper introduces a new framework for quantum simultaneous information and power transfer (QSIPT), enabling the joint use of quantum states for classical information and energy transfer in quantum communication systems. We propose a novel model in which quantum states are simultaneously used to transmit classical information through a quantum channel and transfer energy to an energy harvesting (EH) receiver. The trade-off between communication rate and harvested energy is characterized by the {capacity-power function}, which is defined and characterized for both discrete-variable (DV) and continuous-variable (CV) quantum channels. For DV channels, {we derive the properties of the capacity-power function, providing analytical upper and lower bounds for the amplitude damping channel and an exact closed-form characterization for the quantum erasure channel.} For CV channels, {we extend the mathematical framework by introducing a generalized beam-splitter (BS) receiver with adjustable transmissivity, jointly optimized with a transmitter mean-photon-number budget, that splits the channel output between the information decoder and the EH receiver}. Specifically, we analyze the capacity-power and achievable-rate–power trade-off under various Gaussian encoding schemes including coherent, squeezed, and thermal states for both lossy bosonic and additive Gaussian noise channels. {Closed-form expressions are derived for coherent-state encoding under the joint photon-number-budget and adjustable-transmissivity formulation; squeezed-state inputs are evaluated numerically.} Our results show that, within the considered displaced Gaussian encoding class, coherent states achieve the best capacity-power trade-off, squeezed states do not outperform coherent-state encoding under the phase-insensitive channel and passive receiver architecture, and thermal states enable energy transfer without supporting reliable communication.
\end{abstract}

\begin{IEEEkeywords}
Quantum communications, Quantum simultaneous information and power transfer, Holevo capacity, quantum channel capacity, discrete-variable quantum channels, continuous-variable quantum channels.
\end{IEEEkeywords}

\section{Introduction}
Quantum communication exploits quantum-mechanical phenomena to enable information processing and communication capabilities that are unattainable using classical technologies. Unlike classical channels, which can only transmit classical information, quantum channels provide a versatile platform for transmitting classical information, quantum information, and entanglement-assisted classical information. Quantum communication exploits principles such as superposition, entanglement, quantum tunneling, etc., to achieve outcomes unattainable by classical systems, such as absolute security and higher transmission rates \cite{b1}. The capacity of a quantum communication channel quantifies the maximum rate at which information can be transmitted reliably between a sender and a receiver. Unlike classical channels, whose capacity is fully characterized by Shannon theory, quantum channels introduce multiple capacity notions due to state non-orthogonality and entanglement \cite{b14,b15}. This foundational understanding of quantum channel capacity provides a theoretical baseline for investigating joint information and energy transfer in quantum systems \cite{QSWIPT}. This novel concept for quantum systems draws inspiration from radiofrequency (RF) signals and the associated technology of simultaneous wireless information and power transfer (SWIPT). In SWIPT, RF signals are ingeniously co-designed to serve a dual purpose i.e., simultaneously conveying both information and energy, through appropriate signal processing and hardware engineering \cite{b2}. The idea of wireless power transfer dates back to the early work of Tesla and has since evolved into a key enabling technology for modern low-power communication systems, sensor networks, machine-type networks, and body-area networks \cite{b4}.  Varshney \cite{b5} first formalized the information-energy capacity region for SWIPT in point-to-point scenarios. Subsequent research has expanded this framework to include parallel links point-to-point channels \cite{b6}, and more recent research has explored the integration of SWIPT into more complex network topologies, such as multiple access channels \cite{b7}, interference channels \cite{b8}, multiple-input multiple-output systems \cite{b9}, and multiple-antenna cellular networks \cite{b10}. A comprehensive overview of existing results in SWIPT for various fundamental multi-user channels is presented in \cite{b11}.

Quantum technologies introduce fundamentally new capabilities in communication systems \cite{Hotta2023_QET}, enabling intrinsic randomness, enhanced security guarantees, and transmission mechanisms that go beyond the limitations of classical binary-based digital architectures \cite{b12,b13}. When leveraged correctly, the computational power of quantum technologies can support the execution of tasks much faster and beyond the capabilities of current digital systems \cite{Preskill2018}.  In particular, recent efforts  explore how quantum signals can be used not only to transmit classical or quantum information but also to deliver energy to quantum devices, thereby opening new directions in energy-aware quantum communication design \cite{Meng2025_quantumHarvester}. 

The integration of energy harvesting (EH) mechanisms within quantum information-theoretic models  \cite{Khalfet2025_QSIPT} opens a novel frontier for communication system design. While classical SWIPT uses the dual role of RF signals to enable simultaneous information and power transfer, its extension to the quantum systems introduces fundamentally new challenges and opportunities. In our work, we propose the framework of quantum simultaneous information and power transfer (QSIPT), where quantum signals convey classical information and deliver usable energy to an EH receiver. This paradigm is motivated by the growing need for energy-efficient operation in quantum devices such as quantum internet of things, quantum sensors, embedded receivers, and networked quantum nodes. The unique structure of quantum states and the nature of quantum noise significantly alter the trade-offs compared to classical SWIPT, requiring a new information-theoretic framework. With the exception of a few studies (e.g., \cite{QSWIPT,Khalfet2025_QSIPT}), existing works do not consider the co-design of information and energy in quantum communication from an information theory standpoint. In addition, most of the aforementioned studies focus on the discrete variable (DV) case and therefore the impact of the EH constraints on  continuous variable (CV) channels has not been investigated. 

Although CV quantum channels constitute the dominant scenario for current optical implementations, this work presents both DV and CV analyses within a unified QSIPT framework. The DV setting plays a foundational role by enabling a transparent and, in several cases, analytical characterization of the capacity-power function in finite-dimensional systems. In particular, the DV analysis reveals fundamental structural properties of the capacity-power function, including monotonicity, the emergence of capacity plateaus, and transitions in the optimal signaling strategy as the EH requirement increases. These behaviors provide valuable conceptual benchmarks that guide the interpretation of the  CV results, where closed-form characterizations are generally unavailable and numerical optimization is required. Moreover, the DV study highlights mechanisms such as boundary activation of the EH constraint and the effective loss of signaling degrees of freedom at high EH levels, which also manifest in CV channels. Finally, beyond optical CV systems, DV channels are directly relevant to emerging finite-dimensional quantum technologies, such as superconducting and solid state quantum devices, where energy extraction and dissipation constraints are intrinsic.

This paper establishes a unified mathematical framework of QSIPT by characterizing the fundamental trade-offs between classical communication and EH  in quantum channels. A novel capacity-power function $C_Q(B)$ is introduced to quantify the trade-off for both  DV and  CV quantum channels.  For the DV case, we examine the amplitude damping channel and the quantum erasure channel, which serve as guideline  models for the dissipative and lossy information transfer, respectively. For the CV case, we investigate the lossy bosonic channel and the additive Gaussian noise (AGN) channel, which represent two fundamental models in optical quantum communication. In each scenario, we derive the capacity-power functions that quantify the trade-off between information transmission and harvested energy, under physically consistent constraints and modulation schemes.
Our contributions are twofold:
\begin{itemize}
\item \textbf{Capacity-Power Analysis for DV quantum Channels}: We study the amplitude damping and quantum erasure channels as representative DV models, offering theoretical insights for the QSIPT framework. {For the amplitude damping channel, we derive analytical upper and lower bounds on the capacity-power function $C_Q(B)$, revealing a transition from maximally mixed to energy-biased input ensembles as the harvested energy constraint increases. In the case of the quantum erasure channel, we obtain a closed-form expression for $C_Q(B)$ under the standard qubit normalization $E_0=0$, $E_1=1$, which is shown to be strictly concave and monotonically decreasing in the active EH regime.} Although CV channels dominate current optical implementations, the DV setting plays a foundational analytical role for the more practical CV scenarios.
\item \textbf{Capacity-Power Analysis for CV Channels}: {We propose a unified QSIPT framework for CV quantum channels by modeling the co-located receiver via a beam splitter with an adjustable transmissivity that controls the fraction of received optical power directed to the information decoder, and by imposing a mean-photon-number budget at the transmitter. The capacity-power function is characterized through the joint optimization of the input photon number and the receiver transmissivity subject to the EH constraint. For the lossy bosonic and AGN channels, we obtain a closed-form characterization of the capacity-power function under coherent-state encoding, in which the optimal strategy operates the transmitter at its full photon budget and realizes the trade-off entirely through the receiver beam-splitter architecture. Across all tested parameter configurations, numerical optimization over the considered displaced Gaussian encoding family (including squeezed states) yields zero squeezing as the optimum, indicating that coherent-state encoding achieves the best performance among the considered Gaussian inputs. This finding is consistent with the phase-insensitive nature of the channel and the passive beam-splitter receiver architecture, which do not enable the exploitation of quadrature asymmetry introduced by squeezing, and with the known optimality of coherent states for phase-insensitive Gaussian channels under energy constraints~\cite{Giovani,Holevo2001}.} Compared to DV models, CV channels have greater implementation interest, making them more suitable for practical QSIPT designs.
\end{itemize}
\section{Background and Notation}
In this section, we introduce the key concepts and notation for  DV and  CV quantum communication systems. These definitions provide the foundational understanding required for the analysis presented in the remainder of the paper.
\subsection{Discrete-variable quantum communication}
In  DV quantum systems, information is encoded into quantum states acting on a finite-dimensional Hilbert space $\mathcal{H}_d$. A classical input symbol $x$ from a finite alphabet $\mathcal{X}$ is mapped to a quantum state $\rho_x$, where $\rho_x$ is a density operator on $\mathcal{H}_d$. 
We denote by $\{p_x, \rho_x\}$ the ensemble of input states, where $p_x$ is the probability of transmitting the quantum state $\rho_x$.  The average state of the ensemble is given by
\begin{equation}
	\label{eq:avg_state}
    \rho = \sum_{x \in \mathcal{X}} p_x \rho_x.
\end{equation}
The von Neumann entropy of a quantum state $\rho$ is given by
\begin{equation}
\label{eq:vn_entropy}
S(\rho) = -\mathrm{Tr}[\rho \log_2 \rho],
\end{equation}
where $\mathrm{Tr}[\cdot]$ denotes the \emph{trace} of an operator, defined as the sum of its diagonal elements (or equivalently, the sum of its eigenvalues).
We denote by $|0\rangle$ and $|1\rangle$ the computational basis vectors of a two-level quantum system (qubit). A pure state $|\psi\rangle$ in this system can be written as $|\psi\rangle = \alpha |0\rangle + \beta |1\rangle$, where $\alpha, \beta \in \mathbb{C}$ and $|\alpha|^2 + |\beta|^2 = 1$.
The Hamiltonian operator $H$ is a Hermitian operator that models the energy levels of the quantum system. In the case of a single qubit, the Hamiltonian typically takes the diagonal form
\begin{equation}
\label{eq:hamiltonian}
    H = 
    \begin{pmatrix}
        E_0 & 0 \\
        0 & E_1
    \end{pmatrix},
\end{equation}
where $E_0$ and $E_1$ denote the energy of the ground and excited state, respectively. {Throughout this work we adopt the standard qubit normalization $E_0=0$, $E_1=1$, which sets the unit of energy to the level spacing of the qubit and ensures that $\mathrm{Tr}[H\rho]\in[0,1]$ for any qubit state $\rho$. This convention is used in all subsequent DV analyses, including Theorems~1 and~2 and the corresponding appendices.} We denote by $ \langle \cdot \rangle \equiv \mathrm{Tr}[H \rho]$ the expected energy of the quantum state $\rho$ under the Hamiltonian $H$. This quantity plays a central role in modeling  EH mechanisms in DV quantum communication systems.
\subsection{Continuous-variable quantum communication}
In CV quantum systems \cite{Usenko2025_CVReview}, information is encoded into quadrature observables of bosonic modes, which are modeled by infinite-dimensional Hilbert spaces. These systems are particularly relevant in quantum optics, where light fields serve as carriers of quantum information. We denote by $\hat{a}$ and $\hat{a}^\dagger$ the bosonic annihilation and creation operators, which satisfy the canonical commutation relation 
\begin{equation}
\label{eq:commutator}	
[\hat{a}, \hat{a}^\dagger] = \hat{a} \hat{a}^\dagger - \hat{a}^\dagger \hat{a} = \mathbb{I},
\end{equation}
 where $\mathbb{I}$ denotes the identity operator on the Hilbert space of a single bosonic mode.  This is known as the \emph{commutator} of $\hat{a}$ and $\hat{a}^\dagger$, and it reflects the fundamental non-commutativity of quantum operators. The quadrature operators $\hat{X}$ and $\hat{P}$ are given by
\begin{equation}
	\label{eq:quadratures}
    \hat{X} = \frac{1}{\sqrt{2}}(\hat{a} + \hat{a}^\dagger), \quad 
    \hat{P} = \frac{1}{\sqrt{2}i}(\hat{a} - \hat{a}^\dagger),
\end{equation}
which correspond to the position and momentum quadratures, respectively, and satisfy $[\hat{X}, \hat{P}] = i$.

{A coherent state $|\alpha\rangle$ is the eigenstate of the annihilation operator $\hat{a}$, satisfying}
\begin{equation}
    \hat{a} |\alpha\rangle = \alpha |\alpha\rangle, \quad \alpha \in \mathbb{C}.
\end{equation}
Coherent states are generated by applying the displacement operator $D(\alpha) = \exp(\alpha \hat{a}^\dagger - \alpha^* \hat{a})$ to the vacuum state, yielding $|\alpha\rangle = D(\alpha) |0\rangle$. These states exhibit symmetric quadrature variances,
\begin{equation}
\label{eqVarCoher}
    \mathrm{Var}(\hat{X}) = \mathrm{Var}(\hat{P}) = \frac{1}{2}.
\end{equation}
{These states carry a mean photon number $N_s = |\alpha|^2$.} {A squeezed vacuum state $|r\rangle$ is obtained by applying the squeezing operator as follows
}
\begin{equation}
\label{eqSq}
S(r) = \exp\left[\frac{r}{2}(\hat{a}^2 - \hat{a}^{\dagger 2})\right],
\end{equation}
to the vacuum, where $r \in \mathbb{R}$ is the squeezing parameter, i.e., 
\begin{equation}
\label{eqSqrho}
|r\rangle = S(r) |0\rangle,
\end{equation}
The resulting state exhibits asymmetric quadrature variances,
\begin{equation}
\label{eqVarSq}
    \mathrm{Var}(\hat{X}) = \frac{1}{2}e^{-2r}, \quad 
    \mathrm{Var}(\hat{P}) = \frac{1}{2}e^{2r},
\end{equation}
and its mean photon number is given by $N_s = \sinh^2 (r)$.
{A thermal state $\rho_{\text{th}}$ is a mixed state representing a bosonic mode in thermal equilibrium. It is diagonal in the Fock basis, i.e.,
}
\begin{equation}
\label{eqRhoTh}
    \rho_{\text{th}} = \sum_{n=0}^\infty \frac{N_{s}^n}{(1 + N_{s})^{n+1}} |n\rangle \langle n|,
\end{equation}
where {$N_s$ denotes the mean photon number of the thermal state.} Thermal states have equal quadrature variances, i.e., 
\begin{equation}
    \mathrm{Var}(\hat{X}) = \mathrm{Var}(\hat{P}) = \frac{1}{2}(1 + 2N_{s}),
\end{equation}
and are used to model background noise and classical signal mixtures. We denote by $\hat{n} = \hat{a}^\dagger \hat{a}$ the photon number operator, and by $\mathrm{Tr}[\hat{n} \rho]$ the mean photon number of a state $\rho$, which corresponds to the expected energy in natural units.  These coherent, squeezed, and thermal states form the basis of practical CV quantum communication schemes.

\subsection{QSIPT topology}
\begin{figure}\centering
\includegraphics[width=\linewidth]{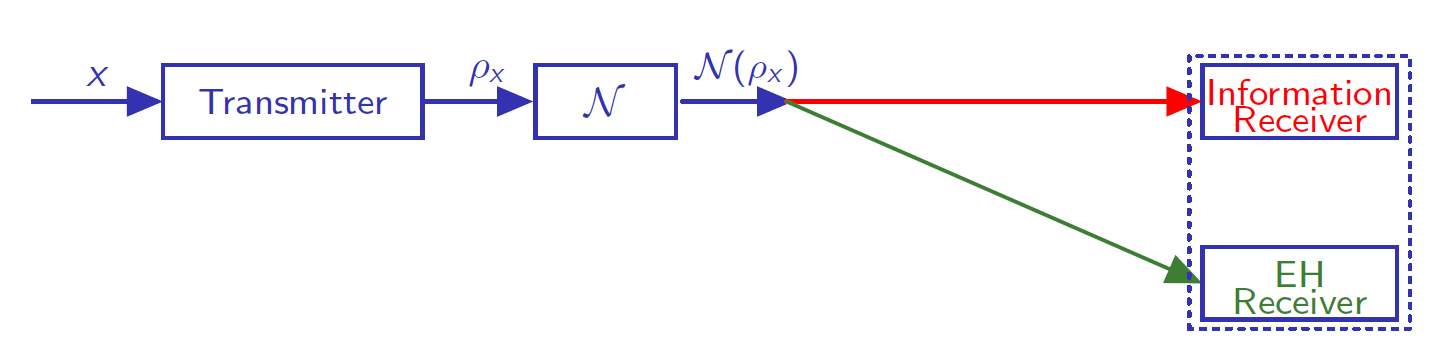}
\caption{A co-located QSIPT system over a DV quantum channel.} \label{model}
\end{figure}
We consider a three-party quantum communication setup with a co-located information and EH receiver, as illustrated in Fig.~\ref{model}, where a single receiver simultaneously performs information decoding and EH on the channel output. The transmitter encodes classical symbols $x$ into quantum states $\rho_x$ with prior probabilities $p_x$, which are then transmitted through a  quantum channel $\mathcal{N}$. Specifically, we study two representative cases within this general topology: DV quantum channels and CV quantum channels. For the DV case, the average EH is expressed through the expectation $\mathrm{Tr}[H \,\mathcal{N}(\rho)]$, whereas, {in the CV case, co-location is modeled by a beam splitter with adjustable transmissivity $\tau\in[0,1]$ at the receiver, which directs a fraction $\tau$ of the received optical power to the information decoder and the complementary fraction $(1-\tau)$ to the EH receiver. The transmissivity $\tau$ is a design parameter jointly optimized with the input mean photon number $N_s\in[0,N_{\max}]$, where $N_{\max}$ is the transmitter photon-number budget.}

The remainder of the paper is organized as follows. Section~III analyzes the capacity-power trade-off in DV quantum channels, focusing on the amplitude damping and quantum erasure models. Section~IV  extends the analysis to CV quantum channels under the BS based co-located architecture, considering coherent, squeezed, and thermal state encodings. Section~V presents numerical simulations, while {Section~VI} concludes the paper.
\section{Capacity-power Analysis for DV Channels}
 For the DV case, the receiver applies a positive operator-valued measure (POVM) to decode the information, and the accessible information is bounded by the Holevo quantity \( \chi \), defined as
\begin{equation}
\chi(\xi) = S(\rho) - \sum_x p_x S(\rho_x),
\end{equation}
where \( S(\cdot) \) is defined in \eqref{eq:vn_entropy} and \( \xi = \{p_x, \rho_x\} \) is the input ensemble. {When the ensemble $\xi$ is transmitted through a channel $\mathcal{N}$, the Holevo quantity of the resulting output ensemble is denoted $\chi(\xi,\mathcal{N}):=\chi(\mathcal{N}(\xi))$, where $\mathcal{N}(\xi)=\{p_x,\mathcal{N}(\rho_x)\}$ is the output ensemble induced by $\xi$ through $\mathcal{N}$.}

For a DV quantum channel \( \mathcal{N} \), the classical capacity \( C_Q(\mathcal{N}) \) is defined as the maximum Holevo information over all possible input ensembles
\begin{equation}
C_Q(\mathcal{N}) = \max_{\xi} \chi(\xi,\mathcal{N}).
\end{equation}
where \( \mathcal{N}(\xi) \) refers to the action of the quantum channel \( \mathcal{N} \) on the input ensemble $\xi$, mapping it to the output ensemble \( \mathcal{N}(\xi) = \{p_x, \mathcal{N}(\rho_x)\} \). The ensemble-average input state is denoted by
\begin{equation}
\rho = \sum_x p_x \rho_x,
\end{equation}
and the corresponding average output state is given by $\mathcal{N}(\rho)$.
{In the DV case, the harvested energy per channel use is defined as the expectation value of the output Hamiltonian, $\mathrm{Tr}[ H \mathcal{N}(\rho) ]$. This quantity represents the average energy carried by the received quantum state during a single signaling interval. Over multiple channel uses, the total harvested energy accumulates additively, consistent with the assumption of independent channel uses and an additive Hamiltonian.} Therefore, the capacity-power function \( C_Q(B) \) is then defined as the maximum transmission rate achievable under the constraint that the harvested output energy is at least \( B \), i.e., 
{\begin{equation}
        C_Q(B) =
\max_{\substack{\xi=\{p_x,\rho_x\} \\
\mathrm{Tr}[H\,\mathcal{N}({\rho})]\ge B}}
\chi\!\left(\xi,\mathcal{N}\right),
     \end{equation}
}
{where $\chi\!\left(\xi,\mathcal{N}\right)$ denotes  the Holevo quantity of the output ensemble induced by $\xi$ through the channel $\mathcal{N}$.  }
For a blocklength \( n \), this definition generalizes to
\begin{IEEEeqnarray}{l}
\nonumber
C_Q^{(n)}(B) = \max_{\{p_{X^n}, \rho_{X^n}\} : {\mathrm{Tr}}[H^{\otimes n} \mathcal{N}^{\otimes n}(\rho^{ (n)})] \geq nB} \frac{1}{n} \bigg[ S(\mathcal{N}^{\otimes n}(\rho^{ (n)}))\\
- \sum_{x^n} p_{x^n} S(\mathcal{N}^{\otimes n}(\rho_{x^n})) \bigg],
\end{IEEEeqnarray}
where the notation \( H^{\otimes n} \) refers to the \( n \)-fold tensor product of the Hamiltonian operator \( H \), defined as
\begin{equation}
H^{\otimes n} = H \otimes H \otimes \cdots \otimes H \quad \text{(n times)},
\end{equation}
where \( H \) models the energy levels of a single quantum system.
{The use of the additive Hamiltonian $H^{\otimes n}$ reflects the assumption that EH is performed independently on each channel use. Specifically, this model captures the behavior of practical EH receivers, such as photodetection or rectenna-based architectures, which operate locally on individual modes and do not implement coherent multi-mode measurements. While more general $H_n$, in principle, exploit quantum correlations to enhance EH, such measurements would require coherent inter-mode interactions and are beyond the scope of the present framework. Accordingly, we restrict our analysis to additive energy observables and leave the study of collective EH strategies for future work \cite{Campaioli_2017}.}

\begin{remark}
    {Throughout this work, we consider the one-shot Holevo capacity, corresponding to product-state encoding without entanglement across channel uses. While regularization may enhance capacity for certain channels (e.g., amplitude damping), the trade-off characterized by $C_Q(B)$ remains operationally meaningful for practical systems where coding over long blocks may be infeasible under  EH constraints \cite{Shirokov_2020}.}
\end{remark}
\subsection{Properties of the capacity-power function}
In this subsection, we analyze the structural properties of the capacity-power function $C_Q(B)$ for DV quantum channels under EH constraints. We consider channels $\mathcal{N}$ with finite-dimensional input and output spaces, and we assume the energy observable is given by a Hermitian Hamiltonian operator $H$ acting on the channel output. We now state and prove several  properties of $C_Q(B)$.
\begin{proposition}[Monotonicity]
 The function $C_Q(B)$ is non-increasing in $B$.
\end{proposition}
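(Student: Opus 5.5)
The argument is a feasible-set inclusion. For each $B$, let $\mathcal{F}(B)$ be the set of input ensembles $\xi=\{p_x,\rho_x\}$ whose average state $\rho=\sum_x p_x\rho_x$ satisfies $\mathrm{Tr}[H\,\mathcal{N}(\rho)]\ge B$. Then $C_Q(B)=\max_{\xi\in\mathcal{F}(B)}\chi(\xi,\mathcal{N})$. Take $B_1\le B_2$. Any ensemble in $\mathcal{F}(B_2)$ satisfies $\mathrm{Tr}[H\,\mathcal{N}(\rho)]\ge B_2\ge B_1$, so it also lies in $\mathcal{F}(B_1)$. Hence $\mathcal{F}(B_2)\subseteq\mathcal{F}(B_1)$. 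A maximum over a larger set is at least the maximum over a subset, which gives $C_Q(B_2)\le C_Q(B_1)$.

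Two technical points need care. The first is that the maximum should exist, or else be read as a supremum. I would phrase the argument with suprema, since the inclusion argument works unchanged for them. Separately, I would note that in finite dimensions the Holevo quantity is continuous. By Carathéodory-type arguments the ensemble cardinality can be bounded by $d^2$, and the constraint set is closed. Together these make the feasible set compact, so the supremum is attained.

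The second point is the convention when the feasible set is empty. This happens when $B>\max_{\rho}\mathrm{Tr}[H\,\mathcal{N}(\rho)]$, which under the normalization $E_0=0$, $E_1=1$ can occur for $B$ near or above $1$. I would set $C_Q(B)=-\infty$ (or leave it undefined) for such $B$. Monotonicity then holds trivially there, and it holds on the feasible range $B\le B_{\max}:=\max_\rho \mathrm{Tr}[H\,\mathcal{N}(\rho)]$ by the inclusion argument above.

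The only real obstacle is this bookkeeping about attainment and empty feasible sets. The core inequality is immediate. The same argument applies verbatim to $C_Q^{(n)}(B)$, with the constraint threshold $nB$ increasing in $B$.
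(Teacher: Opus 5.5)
Your proposal is correct and matches the paper's proof: both rest on the inclusion $\mathcal{F}_{B_2}\subseteq\mathcal{F}_{B_1}$ for $B_1<B_2$, and on the fact that a maximum over a subset cannot exceed the maximum over the larger set. The paper does not include your extra bookkeeping (reading the maximum as a supremum, attainment via cardinality bounds, and the $-\infty$ convention for empty feasible sets), but that bookkeeping is sound and leaves the core argument unchanged.
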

\begin{IEEEproof}
Let $B_1 < B_2$ and let $\mathcal{F}_B$ denote the feasible set of input ensembles satisfying $\mathrm{Tr}[H \mathcal{N}(\rho)] \geq B$. Then $\mathcal{F}_{B_2} \subseteq \mathcal{F}_{B_1}$, which implies
\[
C_Q(B_2) = \max_{\xi \in \mathcal{F}_{B_2}} \chi(\xi,\mathcal{N}) \leq \max_{\xi \in \mathcal{F}_{B_1}} \chi(\xi,\mathcal{N}) = C_Q(B_1).
\]
\end{IEEEproof}
\begin{proposition}[Piecewise Concavity]
   The function $C_Q(B)$ is piecewise concave. 
\end{proposition}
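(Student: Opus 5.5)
The plan is to prove the stronger statement that $C_Q(B)$ is \emph{concave} on its whole effective domain $\mathcal{B}=\{B:\mathcal{F}_B\neq\emptyset\}$. Piecewise concavity then follows by splitting $\mathcal{B}$ into two pieces: the inactive region, where $C_Q(B)$ is constant, and the active region. The key tool is a time-sharing (flagged-ensemble) construction for which the Holevo quantity behaves superadditively under mixing, while the energy constraint is linear.

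First, record the structure of $\mathcal{B}$. Set $B_{\max}=\max_{\rho}\mathrm{Tr}[H\mathcal{N}(\rho)]$, which is attained because the state space is compact and the map is continuous and linear. Then $\mathcal{B}=(-\infty,B_{\max}]$. Let $\xi^\star$ be a capacity-achieving ensemble and set $B^\star=\mathrm{Tr}[H\mathcal{N}(\rho^\star)]$. For $B\le B^\star$ the constraint is inactive and $C_Q(B)=C_Q(\mathcal{N})$. For $B\in(B^\star,B_{\max}]$, Proposition~1 gives that $C_Q$ is non-increasing.

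We also need that each maximum is attained. By Carath\'eodory, it suffices to consider ensembles with at most $d^2+1$ pure states; this holds for the output Holevo quantity with a linear constraint, using standard arguments. This makes the feasible set compact and $\chi$ continuous, so the maxima defining $C_Q(B)$ exist.

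The main step is concavity. Take $B_1,B_2\in\mathcal{B}$, $\lambda\in[0,1]$, and optimal ensembles $\xi_1=\{p_x,\rho_x\}\in\mathcal{F}_{B_1}$ and $\xi_2=\{q_y,\sigma_y\}\in\mathcal{F}_{B_2}$, with average states $\rho$ and $\sigma$. Define the merged ensemble
\[
\xi_\lambda=\{\lambda p_x,\rho_x\}\cup\{(1-\lambda)q_y,\sigma_y\},
\]
whose average input state is $\lambda\rho+(1-\lambda)\sigma$.

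By linearity of $\mathcal{N}$ and of the trace,
\[
\mathrm{Tr}[H\mathcal{N}(\lambda\rho+(1-\lambda)\sigma)]\ge\lambda B_1+(1-\lambda)B_2,
\]
so $\xi_\lambda\in\mathcal{F}_{\lambda B_1+(1-\lambda)B_2}$.

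A direct computation gives
\[
\chi(\xi_\lambda,\mathcal{N})=\lambda\chi(\xi_1,\mathcal{N})+(1-\lambda)\chi(\xi_2,\mathcal{N})+\Big[S(\lambda\mathcal{N}(\rho)+(1-\lambda)\mathcal{N}(\sigma))-\lambda S(\mathcal{N}(\rho))-(1-\lambda)S(\mathcal{N}(\sigma))\Big].
\]
The bracket is nonnegative by concavity of the von Neumann entropy. Hence
\[
C_Q(\lambda B_1+(1-\lambda)B_2)\ge\chi(\xi_\lambda,\mathcal{N})\ge\lambda C_Q(B_1)+(1-\lambda)C_Q(B_2).
\]

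Finally, conclude piecewise concavity. On $(-\infty,B^\star]$ the function is constant, and on $[B^\star,B_{\max}]$ it is concave and non-increasing. Beyond $B_{\max}$ the problem is infeasible, and we set $C_Q=-\infty$ by convention, which preserves concavity of the extended function.

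The main obstacle I expect is the bookkeeping in the Holevo decomposition when the two ensembles share identical states. Treating the merged ensemble as a labelled union, rather than merging equal $\rho_x$, keeps the identity exact. A secondary point is the attainment of the maxima, which relies on the Carath\'eodory bound on ensemble size.
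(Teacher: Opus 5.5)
Your proof is correct. It reaches a strictly stronger conclusion than the proposition, by a different route from the paper.

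\textbf{How the paper argues.} The paper works at the level of the average state. It notes that the Holevo quantity is concave in the average output state and that the EH constraint is a linear, hence convex, condition on $\rho$. It then asserts that the resulting ``supremum of concave functions over a piecewise-defined domain'' is piecewise concave, leaving room for non-concave junctions where the structure of the optimal ensemble changes. That last inference is not valid as stated, since a pointwise supremum of concave functions need not be concave. The pieces are also never identified.

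\textbf{What your argument does instead.} Your merged-ensemble construction replaces this with an exact identity. Linearity of $\rho\mapsto\mathrm{Tr}[H\mathcal{N}(\rho)]$ makes $\xi_\lambda$ feasible at $\lambda B_1+(1-\lambda)B_2$. Concavity of the von Neumann entropy makes $\chi$ superadditive under mixing. Hence $C_Q$ is concave on all of $(-\infty,B_{\max}]$, and no splitting into pieces is needed.

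\textbf{What this buys.} Concavity gives continuity of $C_Q$ on the interior of its domain, which sharpens Proposition~3. Combined with monotonicity, it also shows that any flat stretch must be the initial plateau of Proposition~4. This is consistent with the closed form in Theorem~2.

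\textbf{Repairing the paper's viewpoint.} The average-state approach can be made to give the same result. For a fixed $\bar\rho$, optimize over its decompositions; the objective becomes $S(\mathcal{N}(\bar\rho))-\widehat{S}_{\mathcal{N}}(\bar\rho)$, where $\widehat{S}_{\mathcal{N}}$ is the convex roof of $\psi\mapsto S(\mathcal{N}(\psi))$. This objective is concave in $\bar\rho$. The value function of a concave maximization under a linear constraint with right-hand side $B$ is then concave in $B$. Your union-of-ensembles computation is exactly the proof that this objective is concave.

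\textbf{Two simplifications.} First, coinciding states are harmless: merging them changes neither the average state nor the conditional-entropy term, which is linear in the weights. Second, attainment of the maxima is unnecessary. Running the argument with $\epsilon$-optimal ensembles and letting $\epsilon\to 0$ gives the same inequality, so the Carath\'eodory step can be dropped.
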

\begin{IEEEproof}
 The Holevo quantity is a concave function of the average output state $\mathcal{N}(\rho)$ and a convex function of the ensemble. However, the constraint $\mathrm{Tr}[H \mathcal{N}(\rho)] \geq B$ defines a convex feasible set in terms of the ensemble average state. The optimization over both probabilities $p_x$ and states $\rho_x$ results in a supremum of concave functions over a piecewise-defined domain (due to potential changes in the optimal number of states or structure), hence the resulting function $C_Q(B)$ is piecewise concave.
\end{IEEEproof}
\begin{proposition}[Continuity]
The function $C_Q(B)$ is upper semi-continuous and left-continuous.
\end{proposition}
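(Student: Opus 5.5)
Our plan is to combine the monotonicity result (Proposition~1) with a compactness argument for the feasible sets $\mathcal{F}_B$. The first step is to reduce the optimization to a compact domain. For a finite-dimensional channel, Carathéodory-type arguments (as in the standard proof that the Holevo capacity is attained) let us restrict to ensembles with at most $d_{\mathrm{in}}^2$ pure input states $\rho_x$, where $d_{\mathrm{in}}$ is the input dimension. The ensemble space $\mathcal{E}$ is then a product of a probability simplex and finitely many copies of the compact set of density matrices, so $\mathcal{E}$ is compact. On $\mathcal{E}$, two maps are continuous: the map $\xi\mapsto\chi(\xi,\mathcal{N})$, because von Neumann entropy is continuous in finite dimension and $\mathcal{N}$ is linear, and the map $\xi\mapsto \mathrm{Tr}[H\mathcal{N}(\rho)]$, which is affine in $\rho$. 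Consequently each $\mathcal{F}_B=\{\xi\in\mathcal{E}:\mathrm{Tr}[H\mathcal{N}(\rho)]\ge B\}$ is a closed subset of a compact set, hence compact. The maximum defining $C_Q(B)$ is therefore attained whenever $\mathcal{F}_B\neq\emptyset$, that is, for $B\le B_{\max}:=\max_\rho \mathrm{Tr}[H\mathcal{N}(\rho)]$. For $B>B_{\max}$ we adopt the convention $C_Q(B)=-\infty$.

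The second step is upper semicontinuity. Take $B_k\to B$ and choose a subsequence along which $C_Q(B_k)\to\limsup_k C_Q(B_k)$. We may assume these values are finite, since otherwise there is nothing to prove. Pick maximizers $\xi_k\in\mathcal{F}_{B_k}$ and, by compactness of $\mathcal{E}$, extract a further subsequence $\xi_k\to\xi^\ast$. Continuity of the energy functional gives $\mathrm{Tr}[H\mathcal{N}(\rho^\ast)]=\lim_k \mathrm{Tr}[H\mathcal{N}(\rho_k)]\ge\lim_k B_k=B$, so $\xi^\ast\in\mathcal{F}_B$. Continuity of $\chi$ then gives $\limsup_k C_Q(B_k)=\chi(\xi^\ast,\mathcal{N})\le C_Q(B)$, which is exactly upper semicontinuity. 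In short, this is the closed-graph property of the constraint correspondence $B\mapsto\mathcal{F}_B$ combined with Berge's maximum theorem.

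The third step is left-continuity, which follows from monotonicity. Let $B_k\uparrow B$. Proposition~1 gives $C_Q(B_k)\ge C_Q(B)$, so $\liminf_k C_Q(B_k)\ge C_Q(B)$. Upper semicontinuity gives $\limsup_k C_Q(B_k)\le C_Q(B)$. Together these yield $\lim_{B'\uparrow B}C_Q(B')=C_Q(B)$.

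The main obstacle is the first step: justifying that restricting to a finite number of states loses no generality, so that $\mathcal{E}$ is genuinely compact and the maximum is attained. We would handle this by noting that both $\chi$ and the energy constraint depend on the ensemble only through the averaged output quantities. A Carathéodory argument on the convex hull of the points $(\mathcal{N}(\rho_x),S(\mathcal{N}(\rho_x)))$, which lie in a finite-dimensional real space, then bounds the number of states needed while preserving both $\rho$ and the value of $\chi$.
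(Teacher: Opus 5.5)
Your proof is correct and follows the same route as the paper. Compactness of the ensemble space, together with continuity of $\chi$ and of the energy functional, gives upper semicontinuity, and the nesting $\mathcal{F}_{B_2}\subseteq\mathcal{F}_{B_1}$ (i.e., monotonicity) gives left-continuity. Your version is in fact more careful than the paper's sketch: the Carath\'eodory cardinality bound is what actually makes the ensemble space compact, and your closed-graph (Berge-type) argument is the right mechanism for upper semicontinuity of the value function in $B$. The paper only asserts these steps, and a pointwise supremum of continuous functions is in general only lower semicontinuous.
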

\begin{IEEEproof}
 The set of admissible input ensembles is compact under the trace-norm topology (due to normalization and finite dimension), and the Holevo quantity is continuous with respect to the ensemble. The supremum over a compact set of continuous functions is upper semi-continuous. Since the feasible set continuously decreases with increasing $B$, the left-continuity also holds. Right-continuity may fail at thresholds where the feasible set becomes empty.
\end{IEEEproof}
\begin{proposition}[Capacity Plateau at Low Energy]
 If $B \leq B_0$, where $B_0 = \mathrm{Tr}[H \mathcal{N}(\rho_{\star})]$ for a capacity-achieving ensemble $\rho_{\star}$ without energy constraints, then
\[
C_Q(B) = C_Q(0), \quad \forall B \leq B_0.
\]
\end{proposition}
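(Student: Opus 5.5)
The plan is a two-sided sandwich argument: Proposition~1 (Monotonicity) gives the upper bound, and the unconstrained optimizer, being feasible, gives the matching lower bound. First I will pin down $C_Q(0)$. Take $B$ at most the smallest eigenvalue of $H$; with the normalization $E_0=0$ this is $B\le 0$. Then $\mathrm{Tr}[H\mathcal{N}(\rho)]\ge B$ holds for every state, so the constraint is vacuous. Hence $C_Q(0)$, and more generally $C_Q(B)$ for any $B$ not exceeding the minimum output energy, equals the unconstrained capacity $C_Q(\mathcal{N})=\max_\xi \chi(\xi,\mathcal{N})$. The maximum is attained for finite-dimensional channels by the compactness and continuity argument already used in the proof of Proposition~3. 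This lets me fix a capacity-achieving ensemble $\xi_\star=\{p_x^\star,\rho_x^\star\}$ with average state $\rho_\star$ and $\chi(\xi_\star,\mathcal{N})=C_Q(0)$.

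Next I will prove the upper bound. For any $B\ge 0$ (and in particular $0\le B\le B_0$), Proposition~1 gives $C_Q(B)\le C_Q(0)$. For $B<0$ the constraint is again vacuous, so $C_Q(B)=C_Q(0)$ directly.

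Then I will prove the lower bound. Fix any $B\le B_0$. By definition $B_0=\mathrm{Tr}[H\mathcal{N}(\rho_\star)]$, so $\mathrm{Tr}[H\mathcal{N}(\rho_\star)]=B_0\ge B$. Thus $\xi_\star\in\mathcal{F}_B$, the feasible set from the proof of Proposition~1. It follows that
\[
C_Q(B)=\max_{\xi\in\mathcal{F}_B}\chi(\xi,\mathcal{N})\ge \chi(\xi_\star,\mathcal{N})=C_Q(0).
\]
Combining this with the upper bound gives $C_Q(B)=C_Q(0)$ for all $B\le B_0$.

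The main obstacle is a well-posedness issue rather than a calculation. The capacity-achieving ensemble need not be unique, and different optimizers can have different output energies. The statement is correct for whichever optimizer $\rho_\star$ is used to define $B_0$. The strongest version takes $B_0=\max\{\mathrm{Tr}[H\mathcal{N}(\rho_\star)]:\rho_\star\text{ capacity-achieving}\}$, and I will remark that this maximum is attained. The argument for attainment is that the set of optimal ensembles is closed, by continuity of $\chi$, inside a compact set, and the energy is a continuous linear functional. One further point needs care: compactness of the ensemble set requires bounding the number of signal states, which I would cite as a Carathéodory-type bound of at most $d^2$ states for output dimension $d$.
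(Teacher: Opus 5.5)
Your proof is correct and follows the paper's argument. The unconstrained optimizer $\xi_\star$ stays feasible for every $B\le B_0$, so $C_Q(B)\ge\chi(\xi_\star,\mathcal{N})=C_Q(0)$; the paper leaves the reverse inequality and the identification of $C_Q(0)$ with the unconstrained capacity implicit, and you supply them via Proposition~1 and the observation that $H\succeq 0$ under $E_0=0$ makes the $B=0$ constraint vacuous, while your remarks on non-uniqueness of $\xi_\star$ and attainment are extra rigor rather than a different route.
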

\begin{IEEEproof}
 For any $B \leq B_0$, the unconstrained optimal ensemble $\rho_{\star}$ remains feasible since $\mathrm{Tr}[H \mathcal{N}(\rho_{\star})] \geq B$. Hence, the maximum value of the Holevo quantity is constant. {Therefore, $C_Q(B) = \chi(\xi_{\star}, \mathcal{N}) = C_Q(0)$,
where $\xi_{\star} = \{p_x^{\star}, \rho_x^{\star}\}$ denotes the capacity-achieving input ensemble.}
\end{IEEEproof}
{
\begin{remark}[Bias induced by the EH constraint in DV channels]
For a binary input ensemble $\xi=\{(1-q,\rho_0),(q,\rho_1)\}$ with $q\in[0,1]$, the ensemble-average input state is $\bar{\rho}(q)=(1-q)\rho_0+q\rho_1$. By linearity of the channel and of the trace operator, the harvested energy at the channel output can be written as
\begin{equation}
B(q)=\mathrm{Tr}\!\big[H\,\mathcal{N}(\bar{\rho}(q))\big]=(1-q)B_0+qB_1,
\end{equation}
where $B_i\triangleq \mathrm{Tr}[H\,\mathcal{N}(\rho_i)]$, $i\in\{0,1\}$. Hence, the EH constraint $B(q)\geq B$ induces an explicit lower bound on the bias parameter, namely $q \geq (B-B_0)/(B_1-B_0)$, whenever $B_1>B_0$. This shows that increasing energy requirements progressively bias the feasible input ensembles toward energy-carrying states, thereby restricting the admissible probability simplex and reducing the achievable Holevo information.
\end{remark}
}

 To illustrate the impact of the EH constraint on the quantum channel performance, we will focus on two indicative examples, i.e., the amplitude damping channel, and the erasure channel. The amplitude damping channel models energy dissipation processes, such as spontaneous emission, while the erasure channel captures the probabilistic loss of information. 
\subsection{Amplitude damping channel}
{The amplitude damping channel models energy dissipation mechanisms in quantum systems, such as spontaneous emission, and is particularly relevant for physical qubit implementations based on energy-level transitions, including atomic and superconducting platforms.}

The Kraus operators for the amplitude damping channel are defined as
{
 \noindent \begin{equation}
    K_0=\begin{pmatrix}1&0\\0&\sqrt{1-\gamma}\end{pmatrix}, \quad K_1=\begin{pmatrix}0&\sqrt{\gamma}\\0&0\end{pmatrix} .
     \end{equation}
     }
where \( \gamma \) is the damping parameter, representing the probability of energy loss. As \( \gamma \) increases, the probability of transitioning from the excited state to the ground state increases, leading to energy dissipation.  For an input quantum state \( \rho_{\text{in}} \), the output state \( \mathcal{N}(\rho) \) after passing through the amplitude damping channel is given by
\begin{equation}
\mathcal{N}(\rho) = K_0 \rho K_0^\dagger + K_1 \rho K_1^\dagger.
\end{equation}
Substituting the Kraus operators, the output state becomes
\begin{equation}
\mathcal{N}(\rho) = \begin{pmatrix} 
\rho_{00} + \gamma \rho_{11} & \sqrt{1 - \gamma} \rho_{01} \\
\sqrt{1 - \gamma} \rho_{10} & (1 - \gamma) \rho_{11} 
\end{pmatrix},
\end{equation}
where \( \rho_{ij} \) are the elements of the input density matrix \( \rho_{\text{in}} \). Using the Hamiltonian  in \eqref{eq:hamiltonian}, the expected energy of the output state is given by
\begin{equation}
{\mathrm{Tr}}[H \mathcal{N}(\rho)] = E_0 \mathcal{N}(\rho)_{00} + E_1 \mathcal{N}(\rho)_{11}
\end{equation}
Substituting the elements of \( \mathcal{N}(\rho) \), we have
\begin{equation}
{\mathrm{Tr}}[H \mathcal{N}(\rho)] = E_0 (\rho_{00} + \gamma \rho_{11}) + E_1 (1 - \gamma) \rho_{11}.
\end{equation}
Thus, the EH constraint is expressed as
\begin{equation}
E_0 (\rho_{00} + \gamma \rho_{11}) + E_1 (1 - \gamma) \rho_{11} \geq B,
\end{equation}
where \( B \) is the minimum harvested energy. The optimization problem for the amplitude damping channel under the EH constraint can now be written as

\begin{equation}
\label{EqOptAmp}
\begin{aligned}
\max_{\{p_x, \rho_x\}} \quad & S\left( \sum_x p_x \mathcal{N}(\rho_x) \right) - \sum_x p_x S\left( \mathcal{N}(\rho_x) \right) \\
\text{subject to} \quad & {\mathrm{Tr}}[H \mathcal{N}(\rho)] \geq B, \\
& \rho = \sum_x p_x \rho_x, \\
& {\mathrm{Tr}}[\rho_x] = 1 \quad \forall x, \\
& p_x \geq 0, \quad \sum_x p_x = 1, \\
& \rho_x \succeq 0 \quad \forall x.
\end{aligned}
\end{equation}
To evaluate the capacity-power function for each quantum channel under consideration, we numerically solve the optimization problem defined in \eqref{EqOptAmp} for the respective channel. The objective function, representing the Holevo information or the capacity as a function of energy, is maximized subject to the EH constraints and other physical constraints, such as positivity and normalization of the density matrices. The optimization problems are formulated in standard convex form and solved by using the CVX toolbox, a MATLAB-based numerical solver designed for convex optimization problems \cite{b16}. Based on the analytical expressions derived above, we now state upper and lower bounds for the capacity-power function of the amplitude damping channel. These bounds are summarized in the following theorem
\begin{theorem}
{Let $C_Q(B)$ denote the capacity-power function of the amplitude damping channel with damping parameter $\gamma\in[0,1]$ and Hamiltonian $H=\mathrm{diag}(0,1)$. Then, for all $B\in[0,1-\gamma]$, the following analytical upper and lower bounds hold:}
\begin{align}
C_Q(B) &\leq \max_{0 \leq a \leq 1 - \frac{B}{1 - \gamma}} H_2((1 - \gamma)(1 - a)), \label{eq:upper_bound_AD}\\
C_Q(B) &\geq H_2\left( \frac{1 - \gamma}{2} \right) - \frac{1}{2} H_2(1 - \gamma), \quad \text{for } B \leq \frac{1 - \gamma}{2}, \label{eq:lower_bound_AD}
\end{align}
where $H_2(x) = -x \log_2 x - (1 - x) \log_2 (1 - x)$ is the binary entropy function{, and $a\in[0,1]$ denotes the ground-state population of the diagonal input state $\rho=\mathrm{diag}(a,1-a)$}.
\end{theorem}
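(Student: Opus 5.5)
For the upper bound I would start from the trivial estimate $\chi(\xi,\mathcal{N})\le S(\mathcal{N}(\rho))$, obtained by discarding the non-negative term $\sum_x p_x S(\mathcal{N}(\rho_x))$ in the Holevo quantity. The EH constraint depends only on the diagonal of the input average state. With $H=\mathrm{diag}(0,1)$ it reads $(1-\gamma)\rho_{11}\ge B$. Writing $a=\rho_{00}$, this becomes $1-a\ge B/(1-\gamma)$, i.e. $a\le 1-\frac{B}{1-\gamma}$, which is exactly the range in \eqref{eq:upper_bound_AD}.

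The next step is to show that the entropy of the output average state is maximized, for fixed diagonal, when the off-diagonal element vanishes. I would argue this through the pinching (dephasing) map in the computational basis. The output diagonal is $(1-(1-\gamma)(1-a),\,(1-\gamma)(1-a))$, and dephasing leaves it unchanged while it can only increase entropy. It follows that $S(\mathcal{N}(\rho))\le H_2((1-\gamma)(1-a))$. Maximizing over the admissible $a$ then gives \eqref{eq:upper_bound_AD}. The condition $B\le 1-\gamma$ guarantees that the admissible interval $[0,1-B/(1-\gamma)]$ is nonempty.

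For the lower bound I would use an explicit feasible ensemble. A natural candidate is $\{(\tfrac12,|0\rangle\langle0|),(\tfrac12,|1\rangle\langle1|)\}$. Its average input state is $\mathrm{diag}(1/2,1/2)$, which harvests $(1-\gamma)/2\ge B$ and is therefore feasible. The component $|0\rangle$ maps to a pure state with entropy $0$, while $|1\rangle$ maps to $\mathrm{diag}(\gamma,1-\gamma)$ with entropy $H_2(1-\gamma)$. The average output is $\mathrm{diag}(1-\tfrac{1-\gamma}{2},\tfrac{1-\gamma}{2})$, whose entropy is $H_2(\tfrac{1-\gamma}{2})$. The Holevo quantity of this ensemble is therefore $H_2\!\left(\frac{1-\gamma}{2}\right)-\frac12 H_2(1-\gamma)$. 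By the definition of $C_Q(B)$ as a maximum over feasible ensembles, this value lower-bounds $C_Q(B)$ for every $B\le(1-\gamma)/2$.

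The main obstacle is the entropy-maximization step of the upper bound, namely justifying that coherences cannot help. Pinching monotonicity (equivalently, concavity of entropy applied to $\rho$ and its phase-flipped copy $Z\rho Z$, which has the same diagonal) handles this cleanly. One should also note that the bound is loose because it drops the output-entropy term. No attempt at tightness is needed, since the statement only asks for bounds.
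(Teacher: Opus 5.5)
Your proposal is correct and follows the paper's proof essentially step for step. The converse uses $\chi\le S(\mathcal{N}_\gamma(\bar\rho))$ with the constraint $\bar a\le 1-B/(1-\gamma)$ and a reduction to diagonal average states, and the achievability uses the equiprobable $\{|0\rangle,|1\rangle\}$ ensemble, whose output energy is $(1-\gamma)/2$. The only difference is how you discard the coherences: you use pinching (concavity of $S$ applied to $\tfrac12(\sigma+Z\sigma Z)$), whereas the paper computes the output eigenvalues $\lambda_\pm$ explicitly and shows that $\lambda_+$ increases with $|c|^2$; your argument is slightly cleaner and does not depend on the qubit eigenvalue formula.
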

\begin{IEEEproof}
The proof is provided in Appendix~\ref{appendix:ad_bounds}.   
\end{IEEEproof}
{The bounds in~\eqref{eq:upper_bound_AD}-\eqref{eq:lower_bound_AD} are derived through distinct analytical constructions: the upper bound originates from a converse argument that maximizes the output von Neumann entropy over diagonal inputs satisfying the EH constraint, while the lower bound is established by an explicit binary pure-state ensemble whose Holevo information is computed in closed form. The two bounds do not coincide in general, as the binary ensemble does not attain the converse upper bound; the gap is given by the strictly positive term $\tfrac12 H_2(1-\gamma)$ for $\gamma\in(0,1)$, reflecting the fact that the upper bound of~\eqref{eq:upper_bound_AD} disregards the contribution of the conditional output entropies to the Holevo quantity.}

\subsection{Quantum erasure channel}
{The quantum erasure channel models a communication scenario in which a qubit is either transmitted correctly or erased with a known probability $p_e$. In the event of an erasure, the receiver is explicitly informed via an erasure flag, which distinguishes this channel from noise models in which errors remain undetected.}
The Kraus operators for the quantum erasure channel are given by
\begin{equation}
K_0 = \sqrt{1 - p_e} \cdot I, \quad K_1 = \sqrt{p_e} \cdot |e\rangle \langle 0|, \quad K_2 = \sqrt{p_e} \cdot |e\rangle \langle 1|,
\end{equation}
where \( p_e \) is the probability of erasure and \( |e\rangle \) is an auxiliary state indicating that the qubit has been erased.
\begin{proposition}
The capacity of the quantum erasure channel without considering the EH is given by
\begin{equation}
C = (1 - p_e) \log_2 d,
\end{equation}
where \( d \) is the dimension of the input space. For a quantum system with a single qubit, \( d = 2 \),  the capacity simplifies to \( C = 1 - p_e \).
\end{proposition}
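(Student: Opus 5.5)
The plan is a standard achievability/converse argument for the classical (Holevo) capacity of the erasure channel $\mathcal{N}_e(\rho)=(1-p_e)\rho\oplus p_e\,|e\rangle\langle e|$. Here $|e\rangle$ is orthogonal to the input space $\mathcal{H}_d$, so the output lives in $\mathcal{H}_d\oplus\mathrm{span}\{|e\rangle\}$ and is block diagonal. The key identity is the entropy of such a direct sum. For any input state $\sigma$ on $\mathcal{H}_d$,
\begin{equation}
S(\mathcal{N}_e(\sigma)) = H_2(p_e) + (1-p_e)\,S(\sigma),
\end{equation}
because the eigenvalues of $\mathcal{N}_e(\sigma)$ are $(1-p_e)\lambda_i(\sigma)$ together with $p_e$. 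I will verify this first, since everything else follows from it.

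Next, I would substitute this identity into the Holevo quantity of an arbitrary ensemble $\xi=\{p_x,\rho_x\}$ with average $\rho$. By linearity, the average output is $\mathcal{N}_e(\rho)$, and the $H_2(p_e)$ terms cancel between the two entropies:
\begin{equation}
\chi(\xi,\mathcal{N}_e) = (1-p_e)\Big[S(\rho)-\sum_x p_x S(\rho_x)\Big] = (1-p_e)\,\chi(\xi).
\end{equation}

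For the \textbf{converse}, I would bound $\chi(\xi)\le S(\rho)\le \log_2 d$, using nonnegativity of entropy and the maximal entropy of a $d$-dimensional state. This gives $\chi(\xi,\mathcal{N}_e)\le(1-p_e)\log_2 d$ for every ensemble.

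For \textbf{achievability}, I would take the $d$ computational basis states with uniform prior. Then $\rho=I/d$ and each $\rho_x$ is pure, so $\chi(\xi)=\log_2 d$ and the bound is attained. Hence $C=(1-p_e)\log_2 d$, which reduces to $1-p_e$ for $d=2$.

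The only delicate point is the block-diagonal entropy identity. It requires the erasure flag to be orthogonal to the signal subspace; this holds by construction and is exactly what makes erasures heralded. Once that is checked, the rest is immediate.

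A secondary remark concerns regularization. Since the paper adopts the one-shot Holevo capacity (see the Remark above), regularization is not needed. Even so, the same computation applied to $\mathcal{N}_e^{\otimes n}$ gives additivity, because the output decomposes into erasure patterns.
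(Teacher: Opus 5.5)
Your proof is correct. It takes a different and more rigorous route than the paper's Appendix~B.

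The paper also works with the pure basis ensemble $\{|0\rangle,|1\rangle\}$, but its computation has three problems:
\begin{itemize}
\item It subtracts the entropies of the \emph{input} states, which are zero, rather than those of the output states $\mathcal{N}(\rho_i)$, which each have entropy $H_2(p_e)$.
\item It takes the average output entropy to be $H_2(p_e)$, and so arrives at the intermediate value $\chi = H_2(p_e)$.
\item It then asserts $C=(1-p_e)\log_2 d$ by appealing to ``maximally mixed'' inputs, with no upper-bound argument over general ensembles.
\end{itemize}

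Your grouping identity $S(\mathcal{N}_e(\sigma)) = H_2(p_e)+(1-p_e)S(\sigma)$ is exactly what is missing there. It shows that the $H_2(p_e)$ terms of the average and conditional outputs cancel. It gives the factorization $\chi(\xi,\mathcal{N}_e)=(1-p_e)\chi(\xi)$ for every ensemble; the paper later uses this same factorization, without proof, in Step~1 of Appendix~C. That in turn yields a genuine converse via $\chi(\xi)\le S(\rho)\le\log_2 d$.

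Your achievability step also makes explicit that what must be maximally mixed is the \emph{average} input state, while the signal states themselves are pure. The only assumption you need, that $|e\rangle$ is orthogonal to $\mathcal{H}_d$, holds by the paper's Kraus construction. Your additivity remark is correct but optional in the one-shot setting the paper adopts.
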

\begin{IEEEproof}
The proof is presented in Appendix \ref{ProofOfProp}.
\end{IEEEproof}
 The quantum erasure channel models the probabilistic loss of information, where the transmitted state \( \rho_{\text{in}} \) is replaced with a fixed erasure state \( |e\rangle \langle e| \) with probability \( p_e \). The output state \( \mathcal{N}(\rho) \) for this channel is given by
\begin{equation}
\mathcal{N}(\rho) = (1 - p_e)\,{\rho_{\mathrm{in}}} + p_e |e\rangle \langle e|,
\end{equation}
where {$\rho_{\mathrm{in}}$ denotes the input state transmitted without erasure (i.e., received intact with probability $1-p_e$)}.

For a Hilbert space including the states \( |0\rangle \), \( |1\rangle \), and the erasure state \( |e\rangle \), the Hamiltonian is given by extending \eqref{eq:hamiltonian},
where \( E_0 \), \( E_1 \), and \( E_e \) are the energy levels of \( |0\rangle \), \( |1\rangle \), and \( |e\rangle \), respectively. {Throughout this subsection we adopt the standard qubit normalization $E_0=0$, $E_1=1$ introduced in Section~II-A, while $E_e\ge 0$ is treated as a free parameter modeling the energy of the erasure flag.} We now derive the analytical expression for the capacity-power function of the quantum erasure channel under an EH constraint. The result is summarized in the following theorem.
\begin{theorem}
{Under the standard qubit energy normalization $E_0=0$, $E_1=1$, let $E_e\ge 0$ denote the erasure-flag energy. The capacity-power function of the quantum erasure channel with erasure probability $p_e\in[0,1]$ is defined for $B\in[0,(1-p_e)+p_eE_e]$ and given by}
\begin{equation}
C_Q(B) =
\begin{cases}
1 - p_e, & B \leq {\dfrac{1-p_e}{2}+p_eE_e}, \\[5pt]
(1 - p_e) H_2\!\left({\dfrac{B-p_eE_e}{1-p_e}}\right), & B > {\dfrac{1-p_e}{2}+p_eE_e}.
\end{cases}
\label{eq:erasure_capacity_expression}
\end{equation}
\end{theorem}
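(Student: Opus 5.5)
The plan is to reduce the Holevo quantity of the erasure channel to $(1-p_e)$ times the Holevo quantity of the input ensemble. Then the EH constraint becomes a constraint on the diagonal of the average input state.

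\emph{Step 1 (decomposition).} The output space splits orthogonally into $\mathrm{span}\{|0\rangle,|1\rangle\}\oplus\mathrm{span}\{|e\rangle\}$. For any state $\sigma$,
\[
S\big((1-p_e)\sigma+p_e|e\rangle\langle e|\big)=H_2(p_e)+(1-p_e)S(\sigma).
\]
Applying this to $\bar\rho=\sum_x p_x\rho_x$ and to each $\rho_x$, the $H_2(p_e)$ terms cancel, giving
\[
\chi(\xi,\mathcal N)=(1-p_e)\,\chi(\xi).
\]
Likewise, linearity gives
\[
\mathrm{Tr}[H\mathcal N(\bar\rho)]=(1-p_e)\,\mathrm{Tr}[H\bar\rho]+p_eE_e=(1-p_e)\bar\rho_{11}+p_eE_e,
\]
using $E_0=0$ and $E_1=1$. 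The constraint is therefore $\bar\rho_{11}\ge b:=(B-p_eE_e)/(1-p_e)$. For $p_e<1$ this lies in $[0,1]$ on the stated range of $B$; if $B<p_eE_e$ the constraint is vacuous. The case $p_e=1$ is trivial, since both sides are $0$.

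\emph{Step 2 (converse).} Since $\chi(\xi)\le S(\bar\rho)$, it suffices to bound $S(\bar\rho)$ subject to $\bar\rho_{11}\ge b$. Pinching in the energy basis does not decrease entropy and leaves the diagonal unchanged, so $S(\bar\rho)\le H_2(\bar\rho_{11})$. We then maximize $H_2(t)$ over $t\in[b,1]$. If $b\le 1/2$, the maximum is $1$ at $t=1/2$. If $b>1/2$, $H_2$ is decreasing on $[1/2,1]$, so the maximum is $H_2(b)$. Multiplying by $(1-p_e)$ gives the claimed upper bound. The threshold $b\le 1/2$ is equivalent to $B\le (1-p_e)/2+p_eE_e$.

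\emph{Step 3 (achievability).} Take the ensemble $\{(1-t,|0\rangle),(t,|1\rangle)\}$ with $t=\max(1/2,b)$. Its average state is $\mathrm{diag}(1-t,t)$, which satisfies the constraint. The states are pure, so $\chi(\xi)=H_2(t)$, and by Step 1 the ensemble achieves $(1-p_e)H_2(t)$, matching the converse. In the first regime this reproduces Proposition 5's capacity $1-p_e$, and it is consistent with the plateau of Proposition 4 with $B_0=(1-p_e)/2+p_eE_e$.

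The main obstacle is Step 1, specifically ensuring the entropy decomposition is exact. This needs the support orthogonality of $|e\rangle$ to the qubit subspace and a careful treatment of the edge cases $b\le0$ and $p_e\in\{0,1\}$. Everything after that is one-dimensional monotonicity of $H_2$.
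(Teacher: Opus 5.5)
Your proposal is correct and follows essentially the same route as the paper's proof. Both arguments factor out $(1-p_e)$ using the orthogonality of $|e\rangle$, reduce the EH constraint to a bound on the excited-state population of the average input, use the fact that removing the off-diagonal term maximizes $S(\bar\rho)$ at fixed diagonal, and then optimize $H_2$ over a one-sided interval. Your explicit split into a converse (via $\chi\le S(\bar\rho)$ and pinching) and an achievability step (via the diagonal pure-state ensemble) makes precise what the paper compresses into ``it suffices to restrict to pure-state ensembles.'' One small wording issue: $b$ need not lie in $[0,1]$ when $B<p_eE_e$, though you already cover this by noting that the constraint is then vacuous.
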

\begin{IEEEproof}
    The proof is provided in Appendix~\ref{appendix:erasure}.
\end{IEEEproof}

\begin{remark}[Comparison with the amplitude damping channel]
The quantum erasure channel admits a closed-form characterization of the capacity-power function $C_Q(B)$ due to its  simplicity. Specifically, the channel acts as a probabilistic mixture of an  input state and a known orthogonal erasure flag. In contrast, the amplitude damping channel induces  a more complex optimization problem and  requires numerical methods or bounding techniques. Moreover, in the erasure channel, the optimal input remains diagonal and the maximizing ensemble is explicitly known, whereas in the amplitude damping case, the optimal ensemble may involve non-orthogonal states that balance information and energy transfer.
\end{remark}
\begin{remark}[{Strict concavity in the active EH regime}]
{In the active EH regime $B>(1-p_e)/2+p_eE_e$, the optimal input bias is
$a^\star(B) = 1 - q_{\min}(B)$, where
\begin{equation*}
q_{\min}(B) \triangleq \frac{B - p_e E_e}{1 - p_e}
\end{equation*}
is an affine, strictly increasing function of $B$, ranging over $[1/2,1]$ throughout
this regime. Substituting into~\eqref{eq:erasure_capacity_expression} and
using the symmetry $H_2(x)=H_2(1-x)$ yields
\begin{equation*}
C_Q(B) = (1-p_e)\,H_2\!\left(q_{\min}(B)\right).
\end{equation*}
Since $H_2(\cdot)$ is strictly concave on $[0,1]$ and strictly decreasing on
$[1/2,1]$, the composition of $H_2$ with the affine map $B\mapsto q_{\min}(B)$
is \emph{strictly concave and monotonically decreasing} in $B$. In particular,
the affinity of $q_{\min}(B)$ in $B$ does not imply linearity of $C_Q(B)$.}
\end{remark}

\section{Capacity-power analysis for CV channels}
{For the CV case, we extend the analysis to quantum channels defined on infinite-dimensional Hilbert spaces, which naturally arise in optical quantum communication systems. Such channels employ CV quantum states, including coherent and squeezed states of light, and are subject to noise and loss mechanisms intrinsic to these physical channels.} To characterize the capacity-power function,  we consider two prominent case studies i.e., the lossy bosonic channel and the AGN channel. These channels are particularly relevant for scenarios involving coherent and squeezed states of light, where noise and loss mechanisms significantly influence the achievable information and energy rates.
\subsection{General framework for CV Channels}
We consider single-mode CV quantum communication systems where information is encoded into quantum states of an electromagnetic field mode. These systems are modeled over infinite-dimensional Hilbert spaces and described in terms of quadrature operators or annihilation/creation operators. { We consider two representative CV quantum channel models that differ fundamentally in the nature of the noise they introduce. The first is the lossy bosonic channel, in which noise enters multiplicatively through attenuation of the input field and coupling to an environmental mode. This channel is characterized by a transmissivity parameter $T \in [0,1]$ and models energy loss during propagation. The second is the  AGN  channel, in which noise acts additively and in a phase-insensitive manner by directly perturbing the field quadratures. Specifically, let $\hat a_{\text{in}}$ and $\hat a_{\text{out}}$ denote the bosonic annihilation operators of the input and output modes, respectively. Moreover, let $\hat x=(\hat a+\hat a^\dagger)/\sqrt{2}$ and $\hat p=(\hat a-\hat a^\dagger)/(i\sqrt{2})$ denote the canonical field quadrature operators.
}
{
 \begin{itemize}
\item \textbf{Lossy bosonic channel}: multiplicative attenuation, modeled as
\[
\hat{a}_{\text{out}}=\sqrt{T}\,\hat{a}_{\text{in}}+\sqrt{1-T}\,\hat{b},
\]
where $T\in[0,1]$ denotes the channel transmissivity and $\hat b$ is an environmental bosonic mode, typically assumed to be in a thermal state.
\item   \textbf{AGN channel}: additive phase-insensitive noise, modeled in the quadrature domain as
\[
\hat{x}_{\text{out}}=\hat{x}_{\text{in}}+\hat{x}_{\text{noise}},\qquad
\hat{p}_{\text{out}}=\hat{p}_{\text{in}}+\hat{p}_{\text{noise}},
\]
\end{itemize}
}
{where $\hat{x}_{\text{noise}}$ and $\hat{p}_{\text{noise}}$ are independent, zero-mean Gaussian noise operators with equal variance, capturing phase-insensitive additive noise.}

\begin{figure}[t]
    \centering
    \includegraphics[width=1\linewidth]{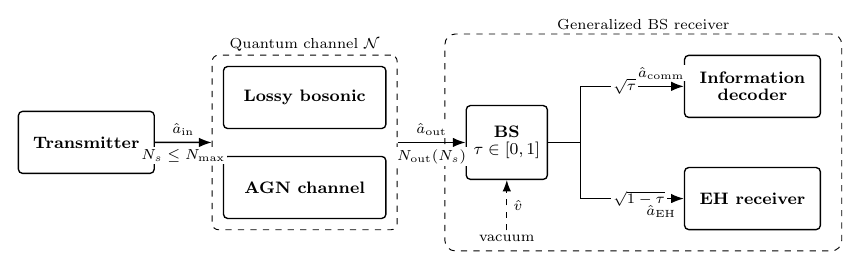}
    \caption{{Generalized BS receiver architecture for CV-QSIPT systems with adjustable transmissivity $\tau\in[0,1]$.}}
    \label{fig:beamsplit}
\end{figure}
{To model simultaneous information decoding and EH in the CV case, we consider a generalized BS receiver with adjustable transmissivity $\tau\in[0,1]$, which directs a fraction $\tau$ of the channel output to the information decoder and the complementary fraction $(1-\tau)$ to the EH receiver, as illustrated in Fig.~\ref{fig:beamsplit} \cite{b18}. Let $\hat{a}_{\mathrm{out}}$ denote the annihilation operator of the channel output and $\hat{v}$ a vacuum mode entering the unused port of the BS. The comm-port and EH-port operators are then given by
\begin{equation}
\hat{a}_{\mathrm{comm}} = \sqrt{\tau}\,\hat{a}_{\mathrm{out}}+\sqrt{1-\tau}\,\hat{v},\quad
\hat{a}_{\mathrm{EH}} = \sqrt{1-\tau}\,\hat{a}_{\mathrm{out}}-\sqrt{\tau}\,\hat{v},
\label{eq:cv_BS_modes}
\end{equation}
so that $\hat{a}_{\mathrm{comm}}$ is used for Holevo decoding and $\hat{a}_{\mathrm{EH}}$ for EH. The symmetric choice $\tau=\tfrac12$ recovers the passive 50:50 BS configuration employed in co-located CV receiver architectures studied in the quantum simultaneous information and power transfer literature~\cite{QSWIPT}. Unlike the symmetric case, the adjustable transmissivity $\tau$ enters the optimization as a design parameter that is jointly chosen with the input statistics, enabling a continuous and operationally meaningful trade-off between the achievable information rate and the harvested energy. Moreover, the transmitter is subject to the physical power constraint
\begin{equation}
0\le N_s\le N_{\max},
\label{eq:Ns_budget}
\end{equation}
where $N_{\max}>0$ denotes the transmitter mean-photon-number budget determined by the source power. Constraint~\eqref{eq:Ns_budget} is the quantum analogue of the average-power constraint standard in classical Gaussian channel analysis and is consistent with the energy-constrained capacity framework of~\cite{Shirokov_2020}.}

The average harvested energy per channel use is modeled as
{
\begin{equation}
E_h = \left\langle \hat{H}_{\mathrm{EH}} \right\rangle
= \mathrm{Tr}\!\left[\hat{H}_{\mathrm{EH}} \, \rho_{\mathrm{EH}} \right],
\label{eq:EH_expectation}
\end{equation}}
{Here, $\hat{H}_{\mathrm{EH}}$ denotes the Hamiltonian associated with the EH output mode, and $\rho_{\mathrm{EH}}$ is the corresponding reduced density operator at the EH port}. {Specifically, under the generalized BS receiver of \eqref{eq:cv_BS_modes}, the average harvested energy per channel use is
\begin{equation}
E_h(N_s,\tau)=\eta_h(1-\tau)\,N_{\mathrm{out}}(N_s),
\label{eq:Eh_general}
\end{equation}
where $\eta_h\in(0,1]$ is the EH efficiency, and $N_{\mathrm{out}}(N_s)$ is the mean output photon number for the channel under consideration, specializing to $N_{\mathrm{out}}^{\mathrm{bos}}(N_s)$ or $N_{\mathrm{out}}^{\mathrm{AGN}}(N_s)$ in Sections~IV-B and~IV-D, respectively. The capacity-power function for a CV channel $\mathcal{N}$ is then defined as
\begin{equation}
C_Q(B)=\!\!\!\sup_{\substack{0\le N_s\le N_{\max}\\ 0\le\tau\le 1\\ E_h(N_s,\tau)\ge B}}\!\!\chi_{\mathrm{comm}}(N_s,\tau),
\label{eq:cv_capacity_power}
\end{equation}
where $\chi_{\mathrm{comm}}(N_s,\tau)$ denotes the Holevo information of the input ensemble transmitted through $\mathcal{N}$ and measured at the BS comm port. Since $[0,N_{\max}]\times[0,1]$ is compact and the admissible set defined by~\eqref{eq:cv_capacity_power} is closed, the supremum is attained  under standard continuity and compactness conditions, which hold in the considered setting.} The exact expression for $\chi_{\mathrm{comm}}$ depends on the encoding family (coherent, squeezed, or thermal), the channel transformation $\mathcal{N}$, and the symplectic spectrum of the output covariance matrix. {Throughout Sections~IV-B-IV-E, superscripts appended to $\chi_{\mathrm{comm}}$ and $C_Q$ identify the encoding family and channel: ``coh'' denotes coherent-state encoding, ``sq'' denotes displaced squeezed-state encoding; ``bos'' identifies the lossy bosonic channel (Section~IV-B) and ``AGN'' identifies the additive Gaussian noise channel (Section~IV-D).} We now apply the CV framework to specific single-mode Gaussian channels, beginning with the lossy bosonic channel, a key model for attenuation in optical quantum communication systems.

\subsection{Lossy bosonic channel with coherent state encoding}
The lossy bosonic channel $\mathcal{N}$ is a single-mode Gaussian channel that transforms the input annihilation operator $\hat{a}_{\mathrm{in}}$ according to the Heisenberg-picture relation as follows
\begin{equation}
\hat{a}_{\mathrm{out}} = \sqrt{T} \, \hat{a}_{\mathrm{in}} + \sqrt{1 - T} \, \hat{b}, \label{eq:bosonic_heisenberg}
\end{equation}
where $T \in [0,1]$ is the channel transmissivity, and $\hat{b}$ is an environmental mode in a thermal state with mean photon number $n_{\mathrm{env}}$. {The output mode $\hat{a}_{\mathrm{out}}$ is directed to the generalized BS receiver of Section~IV-A with transmissivity $\tau\in[0,1]$, producing the comm-port and EH-port operators in~\eqref{eq:cv_BS_modes}, where $\hat{v}$ denotes the vacuum ancilla. The mean photon number at the channel output is
\begin{equation}
{N_{\mathrm{out}}^{\mathrm{bos}}}(N_s)=TN_s+(1-T)n_{\mathrm{env}},
\label{eq:Nout_bosonic}
\end{equation}
so that the expected harvested energy per channel use becomes
\begin{equation}
E_h(N_s,\tau)=\eta_h(1-\tau){N_{\mathrm{out}}^{\mathrm{bos}}}(N_s).
\label{eq:bosonic_energy}
\end{equation}
The transmitter is constrained by the budget $N_s\le N_{\max}$. {Writing $N_{\mathrm{out}}^{\max,\mathrm{bos}}=TN_{\max}+(1-T)n_{\mathrm{env}}$ for the maximum output photon number under the transmitter budget, the maximum harvestable energy is $B_{\max}^{\mathrm{bos}}=\eta_h N_{\mathrm{out}}^{\max,\mathrm{bos}}$.}}

For coherent-state inputs $|\alpha\rangle$ with $\alpha\sim\mathcal{CN}(0,N_s)$, the comm-port output is a displaced thermal state and the average comm-port state is a zero-mean thermal state with mean photon number $\tau {N_{\mathrm{out}}^{\mathrm{bos}}}(N_s)$, whereas each conditional comm-port state is a displaced thermal state with conditional mean photon number $\tau(1-T)n_{\mathrm{env}}$. {Consequently, the comm-port Holevo information is given by
\begin{equation}
{\chi_{\mathrm{comm}}^{\mathrm{coh,bos}}}(N_s,\tau)=g\bigl(\tau {N_{\mathrm{out}}^{\mathrm{bos}}}(N_s)\bigr)-g\bigl(\tau(1-T)n_{\mathrm{env}}\bigr),
\label{eq:bosonic_holevo_comm}
\end{equation}
where $g(x)=(x+1)\log_2(x+1)-x\log_2 x$ is the bosonic entropy function. The capacity-power function is then obtained as the solution of the joint optimization in~\eqref{eq:cv_capacity_power}, which is characterized in closed form in Theorem~\ref{thm:bosonic_coh}. {Throughout the remainder of this paper, $\tau^\star(B)$ denotes the optimal BS transmissivity as a function of the EH requirement $B$; the scalar value $\tau^\star$ is used only when $B$ is fixed or at a boundary point.}}
\begin{theorem}\label{thm:bosonic_coh}
{Let ${N_{\mathrm{out}}^{\max,\mathrm{bos}}}=TN_{\max}+(1-T)n_{\mathrm{env}}$. The capacity-power function of the lossy bosonic channel under coherent-state encoding, with mean photon number budget $N_{\max}$ and jointly optimized BS transmissivity $\tau\in[0,1]$, is
\begin{equation}
{C_Q^{\mathrm{coh,bos}}}(B)
=g\bigl(\tau^\star(B)\,{N_{\mathrm{out}}^{\max,\mathrm{bos}}}\bigr)
-g\bigl(\tau^\star(B)(1-T)n_{\mathrm{env}}\bigr),
\label{eq:bosonic_cq2}
\end{equation}
for all $B\in[0,\eta_h {N_{\mathrm{out}}^{\max,\mathrm{bos}}}]$, where the optimal BS transmissivity and mean photon number are
\begin{equation}
\tau^\star(B)=1-\frac{B}{\eta_h {N_{\mathrm{out}}^{\max,\mathrm{bos}}}},\qquad N_s^\star=N_{\max}.
\label{eq:bosonic_taustar}
\end{equation}
The function ${C_Q^{\mathrm{coh,bos}}}(B)$ is strictly concave and strictly decreasing in $B$, with ${C_Q^{\mathrm{coh,bos}}}(0)=g({N_{\mathrm{out}}^{\max,\mathrm{bos}}})-g((1-T)n_{\mathrm{env}})$ and ${C_Q^{\mathrm{coh,bos}}}(\eta_h {N_{\mathrm{out}}^{\max,\mathrm{bos}}})=0$.}
\end{theorem}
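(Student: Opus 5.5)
The plan is to reduce the joint optimization in \eqref{eq:cv_capacity_power} to a one-dimensional problem by proving monotonicity of the objective in each variable. Write $c\triangleq(1-T)n_{\mathrm{env}}$ and $A(N_s)\triangleq N_{\mathrm{out}}^{\mathrm{bos}}(N_s)=TN_s+c\ge c$. The objective \eqref{eq:bosonic_holevo_comm} becomes $f(N_s,\tau)=g(\tau A(N_s))-g(\tau c)$, and the constraint is $\eta_h(1-\tau)A(N_s)\ge B$.

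\textbf{Step 1 (monotonicity in $N_s$).} For fixed $\tau$, $g$ is increasing and $A(N_s)$ is increasing in $N_s$, so $f$ is non-decreasing in $N_s$. The harvested energy $E_h$ is also non-decreasing in $N_s$. Hence replacing any feasible $(N_s,\tau)$ by $(N_{\max},\tau)$ keeps the pair feasible and does not decrease $f$. This gives $N_s^\star=N_{\max}$ and $A=N_{\mathrm{out}}^{\max,\mathrm{bos}}$.

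\textbf{Step 2 (monotonicity in $\tau$).} Differentiating gives
\[
\tau\,\partial_\tau f=\phi(\tau A)-\phi(\tau c),\qquad \phi(x)\triangleq x\,g'(x)=x\log_2\!\bigl(1+\tfrac1x\bigr).
\]
The function $\phi$ is strictly increasing on $(0,\infty)$, with $\phi(0)=0$. Therefore $\partial_\tau f>0$ whenever $A>c$, i.e.\ whenever $TN_{\max}>0$; the degenerate case $TN_{\max}=0$ gives $C_Q\equiv 0$ trivially. So $f$ is strictly increasing in $\tau$. The constraint with $N_s=N_{\max}$ reads $\tau\le 1-B/(\eta_h N_{\mathrm{out}}^{\max,\mathrm{bos}})$, so the maximum is attained at $\tau^\star(B)$ in \eqref{eq:bosonic_taustar}, which yields \eqref{eq:bosonic_cq2}. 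This range of $\tau$ is nonempty exactly for $B\in[0,\eta_h N_{\mathrm{out}}^{\max,\mathrm{bos}}]$.

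\textbf{Step 3 (shape of the curve).} Let $h(\tau)=g(\tau A)-g(\tau c)$. Using $g''(x)=-1/(x(x+1)\ln 2)$, I compute
\[
h''(\tau)=-\frac{1}{\tau\ln 2}\left[\frac{A}{\tau A+1}-\frac{c}{\tau c+1}\right].
\]
The map $y\mapsto y/(\tau y+1)$ is strictly increasing and $A>c$, so $h''<0$ on $(0,1]$. Thus $h$ is strictly concave and strictly increasing. Since $\tau^\star(B)$ is an affine, strictly decreasing function of $B$, the composition $C_Q^{\mathrm{coh,bos}}(B)=h(\tau^\star(B))$ is strictly concave and strictly decreasing in $B$. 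The endpoint values follow by substitution: $\tau^\star=1$ at $B=0$, and $\tau^\star=0$ at $B=\eta_h N_{\mathrm{out}}^{\max,\mathrm{bos}}$, where $g(0)=0$.

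\textbf{Main obstacle.} The expected difficulty is Step 2. Monotonicity in $\tau$ is not obvious, because both $g$ terms grow with $\tau$. The resolution rests on the monotonicity of $\phi(x)=x\log_2(1+1/x)$, which I would verify directly from $\phi'(x)=\log_2(1+1/x)-1/((x+1)\ln 2)>0$, i.e.\ $\ln(1+u)>u/(1+u)$ with $u=1/x$. A secondary point is the boundary behavior at $\tau=0$, handled by continuity of $g$ at $0$.
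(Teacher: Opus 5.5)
Your proposal is correct and follows the paper's proof essentially step for step. Your Step~1 is the paper's Lemma~D.1. Your Step~2 is Lemma~D.2, using the same function $\varphi(x)=xg'(x)$ and the same inequality $\ln(1+t)>t/(1+t)$. Concavity comes from the same second-derivative formula, $-(A-c)/\bigl(\tau\ln 2\,(\tau A+1)(\tau c+1)\bigr)$, composed with the affine map $\tau^\star(B)$.

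You also handle two degenerate cases explicitly, which is slightly more careful than the paper's version: $c=(1-T)n_{\mathrm{env}}=0$ via $\phi(0)=0$, and $TN_{\max}=0$. The paper tacitly assumes $T>0$ and evaluates $g'$ only at $x>0$.
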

\begin{IEEEproof}
The proof is presented in Appendix \ref{ProofofHelovo}.
\end{IEEEproof}
{The expression in~\eqref{eq:bosonic_cq2} provides an explicit closed-form characterization of the capacity-power trade-off under coherent-state encoding. The optimal strategy operates the transmitter at the full photon-number budget $N_s^\star=N_{\max}$, and realizes the trade-off entirely through the receiver-side BS transmissivity. As the EH requirement $B$ increases, $\tau^\star(B)$ decreases affinely, diverting an increasing fraction of the received optical power to the EH port and progressively reducing the photons reaching the information decoder. The capacity vanishes at the feasibility boundary $B=\eta_h {N_{\mathrm{out}}^{\max,\mathrm{bos}}}$, where $\tau^\star=0$ and the entire output is consumed for harvesting. Strict concavity of ${C_Q^{\mathrm{coh,bos}}}(B)$ follows from the composition of the affinely decreasing $\tau^\star(B)$ with the strictly concave auxiliary function $f(\tau):=g(\tau {N_{\mathrm{out}}^{\max,\mathrm{bos}}})-g(\tau(1-T)n_{\mathrm{env}})$, as established in Appendix~\ref{ProofofHelovo}.}
\subsection{Lossy bosonic channel with squeezed state encoding}
{Under the lossy bosonic channel, a displaced squeezed-vacuum input with squeezing parameter $r\ge 0$ is mapped to a Gaussian output state whose conditional quadrature variances are}
\begin{align}
\label{eq:42}
    \mathrm{Var}(\hat{X}_{\mathrm{out}}) &= \frac{1}{2} \left( T e^{-2r} + (1 - T)(2 n_{\mathrm{env}} + 1) \right), \\
    \label{eq:43}
    \mathrm{Var}(\hat{P}_{\mathrm{out}}) &= \frac{1}{2} \left( T e^{2r} + (1 - T)(2 n_{\mathrm{env}} + 1) \right).
\end{align}
{The mean photon number of the modulated input ensemble is}
\begin{equation}
N_s = \sinh^2(r) + \tfrac{1}{2}\!\left(\sigma_X^2 + \sigma_P^2\right),
\label{eq:sq_Ns}
\end{equation}
{where $\sinh^2(r)$ is the deterministic photon contribution of the squeezing vacuum and $\sigma_X^2$, $\sigma_P^2$ are the modulation variances along the $X$ and $P$ quadratures, respectively. The output mode $\hat{a}_{\mathrm{out}}$ then enters the generalized BS receiver of Section~IV-A, producing comm-port and EH-port covariance matrices that depend on both $(r,\sigma_X^2,\sigma_P^2)$ and $\tau$, as detailed in Appendix~\ref{ProofOfBosSq}. The mean output photon number $N_{\mathrm{out}}^{\mathrm{bos}}(N_s)=TN_s+(1-T)n_{\mathrm{env}}$ and the harvested energy $E_h(N_s,\tau)=\eta_h(1-\tau)N_{\mathrm{out}}^{\mathrm{bos}}(N_s)$ take the same form as in Section~IV-B. Unlike the coherent case, however, the achievable Holevo information at the comm port depends on all three ensemble parameters $(r,\sigma_X^2,\sigma_P^2)$, since squeezing introduces quadrature asymmetry in the output covariance. The result is summarized in the following proposition.}
\begin{proposition}\label{prop:bosonic_sq}
{The achievable-rate–power function of the lossy bosonic channel under displaced squeezed-state encoding, with mean photon number budget $N_{\max}$ and jointly optimized BS transmissivity $\tau\in[0,1]$, is
\begin{equation}
{C_Q^{\mathrm{sq,bos}}}(B)=\!\!\!\!\sup_{\substack{r\ge 0,\;\sigma_X^2,\sigma_P^2\ge 0\\
N_s\le N_{\max},\;0\le\tau\le 1\\
\eta_h(1-\tau){N_{\mathrm{out}}^{\mathrm{bos}}}(N_s)\ge B}}\!\!\!\!
{\chi_{\mathrm{comm}}^{\mathrm{sq,bos}}}(r,\sigma_X^2,\sigma_P^2,\tau),
\label{eq:cv_sq_optim}
\end{equation}
where ${\chi_{\mathrm{comm}}^{\mathrm{sq,bos}}}(r,\sigma_X^2,\sigma_P^2,\tau)$ is the comm-port Holevo information given in Appendix~\ref{ProofOfBosSq}, and $N_s=\sinh^2(r)+(\sigma_X^2+\sigma_P^2)/2$.}
\end{proposition}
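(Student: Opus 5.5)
The statement is essentially definitional. It asserts that the achievable rate under displaced squeezed encoding equals a supremum of the comm-port Holevo information over the admissible parameters. So the proof obligation is to make $\chi_{\mathrm{comm}}^{\mathrm{sq,bos}}$ explicit, verify the constraint set, and argue that the supremum is attained and achievable. I would proceed in three steps.

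First, I propagate covariance matrices. The input ensemble is $D(\alpha)S(r)|0\rangle$ with displacement $(x,p)$ drawn Gaussian with variances $\sigma_X^2,\sigma_P^2$. The conditional state after the channel has quadrature variances \eqref{eq:42}--\eqref{eq:43}. Passing it through the beam splitter \eqref{eq:cv_BS_modes} with a vacuum ancilla maps each variance $V$ to $\tau V+(1-\tau)/2$, giving the conditional comm-port covariance
\[
\Sigma_{\mathrm{c}}=\tfrac12\mathrm{diag}\bigl(\tau T e^{-2r}+\tau(1-T)(2n_{\mathrm{env}}+1)+1-\tau,\;\tau T e^{2r}+\tau(1-T)(2n_{\mathrm{env}}+1)+1-\tau\bigr).
\]
The average comm-port state is Gaussian with covariance $\Sigma_{\mathrm{c}}+\tau T\,\mathrm{diag}(\sigma_X^2,\sigma_P^2)$. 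By displacement invariance of entropy, every conditional state has the same entropy. Writing $\nu=\sqrt{\det\Sigma}$, a single-mode Gaussian state has entropy $g(\nu-\tfrac12)$. Hence
\[
\chi_{\mathrm{comm}}^{\mathrm{sq,bos}}=g\bigl(\sqrt{\det\Sigma_{\mathrm{avg}}}-\tfrac12\bigr)-g\bigl(\sqrt{\det\Sigma_{\mathrm{c}}}-\tfrac12\bigr).
\]
This is the expression to be recorded in the appendix.

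Second, I verify the energy bookkeeping. The mean photon number is $N_s=\sinh^2 r+(\sigma_X^2+\sigma_P^2)/2$, and the EH-port photon number is $(1-\tau)(TN_s+(1-T)n_{\mathrm{env}})$ by linearity of the Heisenberg relations. Together these reproduce the constraint in \eqref{eq:cv_sq_optim}. Achievability of $\chi_{\mathrm{comm}}^{\mathrm{sq,bos}}$ then follows from the HSW theorem applied to the i.i.d. Gaussian ensemble, with a standard truncation or energy-constrained argument as in \cite{Shirokov_2020} for the infinite-dimensional case. Since the proposition only claims achievability, no converse is needed.

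Third, I address attainment of the supremum. The constraint $N_s\le N_{\max}$ bounds $\sinh^2 r$, $\sigma_X^2$ and $\sigma_P^2$, so the parameter set is compact. The objective is continuous, because $g$ is continuous and the determinants are polynomial in $e^{\pm 2r}$, $\sigma^2$ and $\tau$. The constraint set is closed, so the supremum is a maximum, mirroring the argument after \eqref{eq:cv_capacity_power}. Setting $r=0$ and $\sigma_X^2=\sigma_P^2=N_s$ recovers \eqref{eq:bosonic_holevo_comm}, which shows $C_Q^{\mathrm{sq,bos}}\ge C_Q^{\mathrm{coh,bos}}$.

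The main obstacle is the first step. One must correctly handle the non-commuting quadrature structure, in particular checking that the symplectic eigenvalue of a diagonal covariance is $\sqrt{V_XV_P}$. One must also justify the Gaussian-ensemble Holevo formula for infinite-dimensional, energy-constrained ensembles, which I would import from \cite{Holevo2001,Giovani}.
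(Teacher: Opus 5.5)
Your proposal is correct and follows essentially the same route as the paper's proof in Appendix~\ref{ProofOfBosSq}. Both propagate the conditional variances through the channel and the beam splitter via $V\mapsto\tau V+\tfrac{1-\tau}{2}$, add the modulation scaled by $\tau T$ for the average state, take $\nu=\sqrt{V_XV_P}$, and write $\chi_{\mathrm{comm}}^{\mathrm{sq,bos}}=g(\nu_{\mathrm{avg}}-\tfrac12)-g(\nu_{\mathrm{single}}-\tfrac12)$ together with the photon-number and EH bookkeeping; your extra points (HSW achievability for the energy-constrained Gaussian ensemble, compactness so the supremum is attained, and the $r=0$ reduction showing $C_Q^{\mathrm{sq,bos}}\ge C_Q^{\mathrm{coh,bos}}$) are sound additions that the paper's appendix does not spell out.
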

\begin{IEEEproof}
The proof is presented in Appendix \ref{ProofOfBosSq}
\end{IEEEproof}
Unlike the coherent case, this achievable-rate-power function does not have a closed-form expression. However, numerical optimization over the ensemble parameters \( (r, \sigma_X^2, \sigma_P^2) \) can be used to compute lower bounds. The entropy terms are evaluated through the symplectic eigenvalues of the corresponding covariance matrices, as detailed in Appendix~\ref{ProofOfBosSq}.
\begin{remark}[Comparison to coherent encoding]
\label{re:comp:bos}
{Although squeezed-vacuum states are included for completeness, our numerical results show that they do not outperform coherent-state encoding under the considered QSIPT model. This behavior holds across all examined noise and EH thresholds,  where coherent states achieve the Holevo capacity when no entanglement or collective measurements are employed \cite{Giovani, Holevo2001, Leditzky_2018}.  The reason is that, for phase-insensitive bosonic channels, squeezing applied to the vacuum redistributes quadrature noise without increasing the classical distinguishability of displaced codewords,  while simultaneously consuming photons that reduce the modulation budget $N_{\mathrm{mod}} := \tfrac12(\sigma_X^2+\sigma_P^2)= N_s - \sinh^2(r)$, which strictly decreases the comm-port Holevo information under the phase-insensitive channel and passive beam-splitter receiver. Consequently, coherent-state encoding achieves the best performance among the considered displaced Gaussian inputs. We note that potential advantages of squeezing may arise in phase-sensitive channels or under joint modulation squeezing strategies, which are beyond the scope of this work.}
\end{remark}
\begin{remark}[{Feasibility of the EH constraint under squeezed-state encoding}]\label{rem:rstar}
{Under the revised generalized BS receiver of Section~IV-A, the EH constraint $\eta_h(1-\tau)N_{\mathrm{out}}^{\mathrm{bos}}(N_s)\ge B$ associated with the squeezed-state optimization in~\eqref{eq:cv_sq_optim} is always satisfiable for any $B\in[0,B_{\max}^{\mathrm{bos}}]$ via $\tau\in[0,\tau_{\max}(B)]$ with $\tau_{\max}(B)=1-B/(\eta_h N_{\mathrm{out}}^{\max,\mathrm{bos}})\ge 0$, and does \emph{not} impose any additional lower bound on the modulation budget $N_{\mathrm{mod}}$. Consequently, the squeezed-state achievable-rate-power curve $C_Q^{\mathrm{sq,bos}}(B)$ does \emph{not} drop prematurely to zero before reaching $B_{\max}^{\mathrm{bos}}$; it vanishes only at the same feasibility boundary as the coherent-state curve. The performance gap between the two encodings is therefore attributable entirely to the structural reasons identified in Remark~5, with no infeasibility of the EH constraint.}
\end{remark}

\subsection{Lossy bosonic channel with thermal state encoding}
 Under the lossy bosonic channel $\mathcal{N}$, a thermal input state defined in \eqref{eqRhoTh} with mean photon number $N_s$ is mapped to another thermal state with output photon number given by
\begin{equation}
{N_{\mathrm{out}}^{\mathrm{bos}}}(N_s) = T N_s + (1 - T) n_{\mathrm{env}}.
\end{equation}

Since thermal states are preserved under Gaussian attenuation, the output remains diagonal in the Fock basis with entropy
\begin{equation}
S(\mathcal{N}(\rho_{\text{th}})) = g\bigl({N_{\mathrm{out}}^{\mathrm{bos}}}(N_s)\bigr).
\end{equation}
In this work, thermal-state encoding is used only as an energy-harvesting baseline, with no classical message encoded in the thermal energy; hence the corresponding information-bearing ensemble is trivial and its achievable communication rate is zero. Therefore,
\begin{equation}
C_Q^{\rm th}(B)=0
\end{equation}
for all feasible $B \geq 0$.
Because all thermal input states map to thermal output states of the same functional form, no classical information can be encoded in their mixture.

Hence, the achievable-rate-power function under thermal encoding satisfies
\begin{equation}
\label{eq:CQ_th}
C_Q^{\mathrm{th}}(B) = 0 \quad \text{for all feasible } B\ge 0,
\end{equation}
a result that holds for both the lossy bosonic and AGN channels, since the argument depends only on the phase-insensitive and diagonal nature of thermal states, which is channel-independent.

{Since $C_Q^{\mathrm{th}}(B)=0$ irrespective of $\tau$, the EH constraint $E_h\ge B$ is satisfied by setting $\tau=0$ and routing the entire channel output to the EH port, yielding the maximum harvestable energy $\eta_h\,{N_{\mathrm{out}}^{\mathrm{bos}}}(N_s)$, which can be tuned by increasing $N_s$.}

This highlights the essential trade-off in QSIPT: energy and information transfer require structured input ensembles. In particular, coherent and squeezed states enable distinguishability through phase-space structure, whereas thermal states lack this property. Although thermal encoding maximizes the entropy per photon, it is unsuitable for classical information transfer due to its inherently mixed and isotropic nature.
\begin{remark}[Thermal Encoding as a Power-Only Scheme]
Thermal states are the optimal input to maximize entropy at the output for a given energy. However, thermal states are unsuitable for information transfer because they are diagonal in the Fock basis and lack phase coherence, making them incapable of encoding distinguishable signals. In QSIPT systems, thermal encoding may still be valuable in “pure EH”, where information transfer is not required or as part of a time-sharing strategy with coherent states to balance energy and information transfer objectives.
\end{remark}

\subsection{AGN channel with coherent state encoding}
The AGN channel is a type of quantum Gaussian channel in which a quantum signal is subjected to Gaussian-distributed noise that is independent of the input state. This model is especially relevant for scenarios where the main source of degradation is the background noise rather than loss or attenuation. In this channel, the quadratures of the input state are affected by Gaussian noise, resulting in output quadratures that are represented as
{ \begin{align}     \hat{x}_{\mathrm{out}} 
&= \hat{x}_{\mathrm{in}} + \hat{x}_{\mathrm{noise}}, \\
\hat{p}_{\mathrm{out}} 
&= \hat{p}_{\mathrm{in}} + \hat{p}_{\mathrm{noise}}, \\
\langle \hat{x}_{\mathrm{noise}}^2 \rangle 
&= \langle \hat{p}_{\mathrm{noise}}^2 \rangle 
= \frac{2N_0+1}{2}, \\
\bigl[\hat{x}_{\mathrm{noise}}, \hat{p}_{\mathrm{noise}}\bigr] 
&= i.
\end{align}
}
where \( \hat{x}_{\text{in}} \) and \( \hat{p}_{\text{in}} \) represent the input quadratures of the signal;
  \( \hat{x}_{\text{noise}} \) and \( \hat{p}_{\text{noise}} \) represent Gaussian noise terms with mean zero and variance \( N_0 \), independently added to each quadrature.
{The output mode $\hat{a}_{\mathrm{out}}$ enters the generalized BS receiver of Section~IV-A with transmissivity $\tau\in[0,1]$. Under the transmitter photon-number budget $N_s\le N_{\max}$, the mean photon number at the channel output is $N_{\mathrm{out}}^{\mathrm{AGN}}(N_s)=N_s+N_0$, and the expected harvested energy is
\begin{equation}
\label{eqEHAGN}
E_h(N_s,\tau)=\eta_h(1-\tau)(N_s+N_0).
\end{equation}
Writing $N_{\mathrm{out}}^{\max,\mathrm{AGN}}:=N_{\max}+N_0$, the maximum harvestable energy under the photon-number budget is $B_{\max}^{\mathrm{AGN}}=\eta_h(N_{\max}+N_0)$.} We consider coherent state encoding as described in Section~II-B, where the input consists of coherent states $|\alpha\rangle$ drawn from a circularly symmetric complex Gaussian distribution with zero mean and variance $N_s$. {For coherent-state inputs, the comm-port Holevo information is
\begin{equation}
\chi_{\mathrm{comm}}^{\mathrm{coh,AGN}}(N_s,\tau)=g\bigl(\tau(N_s+N_0)\bigr)-g(\tau N_0),
\label{eq:AGN_holevo_comm}
\end{equation}
which reduces to the unconstrained Holevo capacity $g(N_{\max}+N_0)-g(N_0)$ at $\tau=1$ (no EH), and to zero at $\tau=0$ (all received power diverted to EH). The capacity-power function is the solution of the joint optimization in~\eqref{eq:cv_capacity_power} specialized to this channel, and admits the closed-form characterization stated next.}

\begin{theorem}\label{thm:AGN_coh}
{Let $N_{\mathrm{out}}^{\max,\mathrm{AGN}}:=N_{\max}+N_0$. The capacity-power function of the AGN channel under coherent-state encoding, with mean photon number budget $N_{\max}$ and jointly optimized BS transmissivity $\tau\in[0,1]$, is
\begin{equation}
\label{eq:AGN_CQ}
C_Q^{\mathrm{coh,AGN}}(B)=g\bigl(\tau^\star(B)(N_{\max}+N_0)\bigr)-g\bigl(\tau^\star(B)\,N_0\bigr),
\end{equation}
for all $B\in[0,\eta_h(N_{\max}+N_0)]$, where the optimal BS transmissivity and mean photon number are
\begin{equation}
\tau^\star(B)=1-\frac{B}{\eta_h(N_{\max}+N_0)},\qquad N_s^\star=N_{\max}.
\label{eq:AGN_taustar}
\end{equation}
The function $C_Q^{\mathrm{coh,AGN}}(B)$ is strictly concave and strictly decreasing in $B$, with $C_Q^{\mathrm{coh,AGN}}(0)=g(N_{\max}+N_0)-g(N_0)$ and $C_Q^{\mathrm{coh,AGN}}(\eta_h(N_{\max}+N_0))=0$.}
\end{theorem}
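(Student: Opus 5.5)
The plan is to reduce the joint optimization in \eqref{eq:cv_capacity_power}, specialized to \eqref{eqEHAGN} and \eqref{eq:AGN_holevo_comm}, to a one-dimensional problem in $\tau$. The argument mirrors the one behind Theorem~\ref{thm:bosonic_coh}, with $N_{\mathrm{out}}^{\mathrm{bos}}$ replaced by $N_s+N_0$ and the conditional noise $(1-T)n_{\mathrm{env}}$ replaced by $N_0$. Throughout, $g(x)=(x+1)\log_2(x+1)-x\log_2 x$ is strictly increasing and strictly concave on $[0,\infty)$, with
\[
g'(x)=\log_2\!\bigl(1+1/x\bigr),
\]
which is positive and strictly decreasing.

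\textbf{Step 1: $N_s^\star=N_{\max}$.} Fix $\tau$. Then $\chi_{\mathrm{comm}}^{\mathrm{coh,AGN}}(N_s,\tau)=g(\tau(N_s+N_0))-g(\tau N_0)$ is non-decreasing in $N_s$, since $g$ is increasing. The harvested energy $E_h$ is also non-decreasing in $N_s$. So raising $N_s$ to $N_{\max}$ keeps any feasible pair $(N_s,\tau)$ feasible and does not decrease the objective. We may therefore set $N_s=N_{\max}$.

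\textbf{Step 2: $\tau^\star(B)$.} With $N_s=N_{\max}$, the constraint $\eta_h(1-\tau)(N_{\max}+N_0)\ge B$ becomes $\tau\le \tau_{\max}(B):=1-B/(\eta_h(N_{\max}+N_0))$. This bound lies in $[0,1]$ exactly for $B\in[0,\eta_h(N_{\max}+N_0)]$. Let $f(\tau):=g(\tau(N_{\max}+N_0))-g(\tau N_0)$. Then
\[
f'(\tau)=(N_{\max}+N_0)\,g'\!\bigl(\tau(N_{\max}+N_0)\bigr)-N_0\,g'(\tau N_0).
\]
Monotonicity of $f$ follows more cleanly from writing $f(\tau)=\int_{\tau N_0}^{\tau(N_{\max}+N_0)} g'(u)\,du$. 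Alternatively, note that $h(x):=x\,g'(x)=x\log_2(1+1/x)$ is strictly increasing in $x$ (its derivative is $\log_2(1+1/x)-1/((1+x)\ln 2)>0$, using $\ln(1+y)>y/(1+y)$). Writing $\tau f'(\tau)=h(\tau(N_{\max}+N_0))-h(\tau N_0)>0$ for $\tau>0$ and $N_{\max}>0$ shows $f$ is strictly increasing. Hence the optimum is at the boundary $\tau^\star(B)=\tau_{\max}(B)$, and this gives \eqref{eq:AGN_CQ} and \eqref{eq:AGN_taustar}. The endpoint values follow by substituting $\tau^\star=1$ and $\tau^\star=0$, using $g(0)=0$.

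\textbf{Step 3: monotonicity and concavity.} $C_Q^{\mathrm{coh,AGN}}(B)=f(\tau^\star(B))$, where $\tau^\star$ is affine and strictly decreasing in $B$. Strict decrease in $B$ follows immediately from Step~2. Strict concavity of $C_Q^{\mathrm{coh,AGN}}$ is equivalent to strict concavity of $f$ on $[0,1]$, because composing with an affine map preserves strict concavity. This is the step I expect to be the main obstacle. Concavity of $f$ is not automatic: $f''(\tau)=A^2 g''(\tau A)-N_0^2 g''(\tau N_0)$ with $A=N_{\max}+N_0$, a difference of two negative terms. Using
\[
g''(x)=-\frac{1}{x(1+x)\ln 2},
\]
we get
\[
f''(\tau)=\frac{1}{\tau\ln 2}\Bigl[\frac{N_0}{1+\tau N_0}-\frac{A}{1+\tau A}\Bigr].
\]
The map $x\mapsto x/(1+\tau x)$ is strictly increasing, so the bracket is negative whenever $A>N_0$, i.e.\ $N_{\max}>0$. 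Hence $f''<0$ on $(0,1]$. The potentially delicate point is $\tau\to0$. There $f''(\tau)\to(N_0-A)/\ln 2$, which is finite and negative, so strict concavity extends to the whole closed interval. This gives strict concavity of $C_Q^{\mathrm{coh,AGN}}(B)$ on the stated range and completes the proof.
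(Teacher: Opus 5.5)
Your proposal is correct and follows the paper's proof essentially step for step: $N_s^\star=N_{\max}$ because both the objective and $E_h$ are monotone in $N_s$ (Lemma~D.1), $f$ is strictly increasing because $x\,g'(x)$ is (the paper's $\varphi$, Lemma~D.2), and strict concavity comes from the same closed form $f''(\tau)=\frac{N_0-A}{\tau\ln 2\,(1+\tau A)(1+\tau N_0)}$ with $A=N_{\max}+N_0$. One small slip: because of the $1/\tau$ prefactor, $f''(\tau)\to-\infty$ as $\tau\to 0^+$ (not $(N_0-A)/\ln 2$), so strict concavity on the closed interval should instead be deduced from continuity of $f$ at $\tau=0$ together with $f''<0$ on $(0,1]$; the conclusion is unaffected.
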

{
\begin{IEEEproof}
The proof follows the same two-lemma structure as Appendix~\ref{ProofofHelovo},
with $N_{\mathrm{out}}^{\mathrm{AGN}}(N_s)=N_s+N_0$, $T=1$, and $(1-T)n_{\mathrm{env}}$
replaced by $N_0$. Specifically, with $f_{\mathrm{AGN}}(\tau):=g\bigl(\tau(N_{\max}+N_0)\bigr)-g(\tau N_0)$,
Lemma~D.1 gives $N_s^\star=N_{\max}$, and the analogue of Lemma~D.2 yields
\[
f_{\mathrm{AGN}}'(\tau)
=\frac{1}{\tau}\!\left[\varphi\bigl(\tau(N_{\max}+N_0)\bigr)-\varphi(\tau N_0)\right]>0,
\]
since $N_{\max}+N_0>N_0$. Strict concavity follows from
\[
f_{\mathrm{AGN}}''(\tau)
=\frac{N_0-(N_{\max}+N_0)}{\tau\ln 2\,\bigl(\tau(N_{\max}+N_0)+1\bigr)(\tau N_0+1)}<0.
\]
The remainder of the proof is identical to Appendix~\ref{ProofofHelovo}.
\end{IEEEproof}
}
{The expression in~\eqref{eq:AGN_CQ} reveals a structure parallel to that of the lossy bosonic channel under coherent-state encoding (Theorem~\ref{thm:bosonic_coh}), with multiplicative attenuation $T$ replaced by additive noise $N_0$. The capacity decreases monotonically and concavely from its unconstrained value $g(N_{\max}+N_0)-g(N_0)$ at $B=0$ to zero at $B=\eta_h(N_{\max}+N_0)$, where the entire received output is routed to the EH port. The AGN channel is less sensitive to input state structure, and coherent states remain optimal among the considered Gaussian encodings under average energy constraints.}
\begin{remark}[Comparison with the lossy bosonic channel]
{The structural similarity between $C_Q^{\mathrm{coh,bos}}(B)$ and $C_Q^{\mathrm{coh,AGN}}(B)$ reflects the parallel roles of multiplicative attenuation and additive noise in constraining the comm-port Holevo information, as further examined in Fig.~7.}
However, unlike the lossy channel, where attenuation scales both signal and noise, the AGN model assumes constant additive noise and linear EH, making it a useful benchmark for QSIPT system design.
\end{remark}
\subsection{AGN channel with squeezed state encoding}
{We consider a displaced squeezed-vacuum input ensemble as in Section~IV-C, with the AGN channel of Section~IV-D replacing the lossy bosonic channel. The channel adds phase-insensitive Gaussian noise of variance $N_0$ to both quadratures, preserving the asymmetric quadrature structure of the squeezed vacuum. The conditional output quadrature variances (prior to the generalized BS receiver of Section~IV-A) are
\begin{align}
\mathrm{Var}(\hat{X}_{\mathrm{out}}) &= \tfrac{1}{2} e^{-2r} + N_0, \\
\mathrm{Var}(\hat{P}_{\mathrm{out}}) &= \tfrac{1}{2} e^{2r} + N_0,
\end{align}
where $r\ge 0$ is the squeezing parameter. Note that unlike the lossy bosonic channel, the AGN channel does not attenuate the displacement modulation, so the modulation variances $\sigma_X^2,\sigma_P^2$ enter the average output covariance with unit coefficient before the BS. The generalized BS with transmissivity $\tau\in[0,1]$ further attenuates the comm-port output by $\tau$ and introduces vacuum noise $(1-\tau)/2$ per quadrature (see Appendix~\ref{ProofOfAGNSQ}).

The transmitter budget $N_s\le N_{\max}$ is enforced, with
\begin{equation}
N_s = \sinh^2(r) + \tfrac{1}{2}(\sigma_X^2+\sigma_P^2)\le N_{\max},
\end{equation}
and the EH constraint at the EH port reads $\eta_h(1-\tau)(N_s+N_0)\ge B$. The achievable-rate-power function under displaced squeezed-state encoding is obtained via the following proposition.}
\begin{proposition}\label{prop:AGN_sq}
{Under the generalized BS receiver of Section~IV-A with transmissivity $\tau\in[0,1]$ and transmitter budget $N_s\le N_{\max}$, the achievable-rate-power function of the AGN channel under displaced squeezed-state encoding is
\begin{equation}\label{eq:AGN_sq_optim}
\begin{aligned}
C_Q^{\mathrm{sq,AGN}}(B)
=
\sup_{\substack{
r\ge 0,\;\sigma_X^2,\sigma_P^2\ge 0,\;\tau\in[0,1]\\
N_s\le N_{\max}\\
\eta_h(1-\tau)(N_s+N_0)\ge B
}}
\;
\chi_{\mathrm{comm}}^{\mathrm{sq,AGN}}
(r,\sigma_X^2,\sigma_P^2,\tau).
\end{aligned}
\end{equation}
where $\chi_{\mathrm{comm}}^{\mathrm{sq,AGN}}$ is the comm-port Holevo information evaluated from the symplectic eigenvalues of the average and conditional comm-port covariance matrices (see Appendix~\ref{ProofOfAGNSQ}). The optimization does not admit a closed form for $r>0$ and is performed numerically. Numerical results over a representative grid of parameters consistently yield the optimal squeezing $r^\star=0$, so that displaced coherent-state encoding (Theorem~\ref{thm:AGN_coh}) achieves the capacity-power function within the displaced Gaussian encoding class.}
\end{proposition}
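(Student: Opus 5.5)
Here $\chi_{\mathrm{comm}}^{\mathrm{sq,AGN}}$ is the comm-port Holevo information of a Gaussian ensemble, which is what we must compute explicitly; the supremum in \eqref{eq:AGN_sq_optim} is then simply the definition \eqref{eq:cv_capacity_power} specialized to this encoding family. The plan has three steps: derive the two comm-port covariance matrices, turn $\chi_{\mathrm{comm}}^{\mathrm{sq,AGN}}$ into a closed formula through symplectic eigenvalues, and check that the supremum is attained and reduces to Theorem~\ref{thm:AGN_coh} when $r=0$. We do not try to prove analytically that $r^\star=0$; as the statement says, that part is established numerically.

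\textbf{Covariance matrices.} We propagate the first and second moments through the AGN channel and then the beam splitter \eqref{eq:cv_BS_modes}. A displaced squeezed state with displacement $(x,p)$ drawn from independent Gaussians of variances $\sigma_X^2,\sigma_P^2$ gives the following conditional comm-port covariance, since the displacement does not enter:
\[
V_{\mathrm{cond}}=\mathrm{diag}\!\Big(\tau(\tfrac12e^{-2r}+N_0)+\tfrac{1-\tau}{2},\;\tau(\tfrac12e^{2r}+N_0)+\tfrac{1-\tau}{2}\Big).
\]
Averaging over the displacements adds $\tau\,\mathrm{diag}(\sigma_X^2,\sigma_P^2)$, which gives the average-state covariance $V_{\mathrm{avg}}$. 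Because the channel is Gaussian, both the conditional and the average comm-port states are Gaussian.

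\textbf{Holevo formula.} Every conditional state has the same covariance $V_{\mathrm{cond}}$, because displacements are unitary. The Holevo quantity is therefore
\[
\chi_{\mathrm{comm}}^{\mathrm{sq,AGN}}=g\bigl(\nu_{\mathrm{avg}}-\tfrac12\bigr)-g\bigl(\nu_{\mathrm{cond}}-\tfrac12\bigr),
\]
where $\nu=\sqrt{\det V}$ is the symplectic eigenvalue of a single mode; we use the standard formula $S=g(\nu-\tfrac12)$ for Gaussian states in this variance convention. As a sanity check we set $r=0$ and $\sigma_X^2=\sigma_P^2=N_s$. This should give $\nu_{\mathrm{avg}}=\tau(N_s+N_0)+\tfrac12$ and $\nu_{\mathrm{cond}}=\tau N_0+\tfrac12$, recovering \eqref{eq:AGN_holevo_comm}. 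Before running this check we must fix a consistent convention relating the modulation variance to photon number, so that $N_s=\sinh^2 r+\tfrac12(\sigma_X^2+\sigma_P^2)$ matches the coherent case.

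\textbf{Well-posedness and reduction to the coherent case.} The energy constraint in \eqref{eq:AGN_sq_optim} follows from linearity. The EH port receives $(1-\tau)$ times the channel-output photon number, and that number equals $N_s+N_0$ because the mean photon number of the input ensemble is $N_s$. This gives \eqref{eqEHAGN} for every $r$.

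The main obstacle is attainment of the supremum, because $r$ and $\sigma^2$ are unbounded a priori. The budget $N_s\le N_{\max}$ confines them to a compact set: $\sinh^2 r\le N_{\max}$ and $\sigma_X^2+\sigma_P^2\le 2N_{\max}$. The objective is continuous, since $g$ is continuous and $\nu\ge\tfrac12$, so the supremum is a maximum. Feasibility for every $B\le\eta_h(N_{\max}+N_0)$ is shown exactly as in Remark~\ref{rem:rstar}.

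Finally, fixing $r=0$ gives the lower bound $C_Q^{\mathrm{sq,AGN}}(B)\ge C_Q^{\mathrm{coh,AGN}}(B)$ by Theorem~\ref{thm:AGN_coh}. We then verify numerically on a grid that the maximizer has $r^\star=0$, which yields equality.
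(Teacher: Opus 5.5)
Your proposal is correct and follows the paper's proof in Appendix~\ref{ProofOfAGNSQ} essentially step for step: the same conditional and averaged comm-port variances, the same $g(\nu_{\mathrm{avg}}-\tfrac12)-g(\nu_{\mathrm{single}}-\tfrac12)$ formula, the same EH constraint, and numerics for $r^\star=0$; your compactness argument and the $r=0$ lower bound $C_Q^{\mathrm{sq,AGN}}(B)\ge C_Q^{\mathrm{coh,AGN}}(B)$ are valid additions that the paper leaves implicit. If you want to replace the grid search by a proof, note that $V_X^{\mathrm{avg}}+V_P^{\mathrm{avg}}=2\tau(N_s+N_0)+1$, so AM--GM gives $\nu_{\mathrm{avg}}\le\tau(N_s+N_0)+\tfrac12$, while $V_X^{\mathrm{single}}V_P^{\mathrm{single}}=a^2+2ab\cosh 2r+b^2\ge(\tau N_0+\tfrac12)^2$ with $a=\tau N_0+\tfrac{1-\tau}{2}$ and $b=\tfrac{\tau}{2}$ (strict for $r>0$ when $\tau,N_0>0$); hence $\chi_{\mathrm{comm}}^{\mathrm{sq,AGN}}\le g(\tau(N_s+N_0))-g(\tau N_0)$, and by the monotonicity in Theorem~\ref{thm:AGN_coh} the squeezed curve coincides exactly with the coherent one.
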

\begin{IEEEproof}
The proof is presented in Appendix~\ref{ProofOfAGNSQ}.
\end{IEEEproof}
\begin{remark}[Optimality of coherent states for the AGN channel]
{The structural reason for $r^\star=0$ in the AGN setting mirrors that of the lossy bosonic channel (Remark~\ref{re:comp:bos}): the phase-covariance of the channel-BS cascade implies that the comm-port Holevo information depends only on the isotropic component of the input quadrature variances, so quadrature asymmetry introduced by squeezing provides no gain. Simultaneously, any $r>0$ strictly reduces the modulation budget $N_{\mathrm{mod}}=N_s-\sinh^2(r)$ at the same total photon number $N_s=N_{\max}$, which strictly decreases the comm-port Holevo information. Coherent-state encoding  is therefore strictly optimal within the considered displaced Gaussian encoding class.}
\end{remark}
\subsection{AGN with thermal state encoding}
We consider a thermal input state defined in \eqref{eqRhoTh} with mean photon number $N_s$. The AGN channel adds phase-insensitive Gaussian noise to both quadratures
{\begin{align}
\hat{a}_{\mathrm{out}} 
&= \hat{a}_{\mathrm{in}} + \hat{a}_{\mathrm{noise}}, \label{eq:agn_operator} \\
\langle \hat{a}_{\mathrm{noise}} \rangle 
&= 0, \qquad 
\langle \hat{a}_{\mathrm{noise}}^{\dagger} \hat{a}_{\mathrm{noise}} \rangle = N_0, \\
\bigl[\hat{a}_{\mathrm{noise}}, \hat{a}_{\mathrm{noise}}^{\dagger}\bigr] 
&= 1
\end{align}}

Since thermal states are phase-insensitive, and the AGN channel acts symmetrically, the output is again a thermal state with mean photon number ${N_{\mathrm{out}}^{\mathrm{AGN}}}(N_s) = N_s + N_0$. Since thermal states are diagonal and commute, any ensemble of thermal inputs maps to output states of identical spectral shape. As in the lossy bosonic case, the thermal curve is treated as an energy-only baseline, with no classical message encoded in the thermal energy. Under this convention, the information-bearing ensemble is trivial and the achievable communication rate is zero, i.e., 
\begin{equation}
\label{eq:CQ_th_AGN}
C_Q^{\mathrm{th}}(B) = 0 \quad \text{for all feasible } B\ge 0,
\end{equation}
consistently with~\eqref{eq:CQ_th}. {Since $C_Q^{\mathrm{th}}(B)=0$ irrespective of $\tau$, the EH constraint $E_h\ge B$ is satisfied by setting $\tau=0$, which yields the feasibility condition $N_s\ge B/\eta_h - N_0$.}
This constraint is always feasible for sufficiently large $N_s$, even though the corresponding $C_Q(B)$ remains zero.
Thermal-state encoding over the AGN channel illustrates the complete decoupling between energy and information in quantum optical systems: these states maximize output entropy per photon but lack phase coherence, rendering them incapable of encoding distinguishable classical messages.
\begin{remark}[Thermal States as EH-Only Inputs]
Thermal states serve as “power-only” signals that saturate the EH constraint without contributing to communication. In QSIPT systems, they may be used in dedicated EH phases or in time-sharing with coherent encodings to optimize the achievable-rate-power function.
\end{remark}
{The preceding analysis considers the capacity-power trade-off of coherent-state encoding and the achievable-rate–power trade-off of the squeezed and thermal-state encodings. More generally, composite signaling strategies can be employed to enlarge the achievable information-energy region.}
{\subsection{Time-Sharing and Hybrid Encoding Strategies}}
{
\subsubsection{Time-Sharing Between Coherent and Thermal States}
    We first consider a time-sharing strategy in which the transmitter allocates a fraction $\alpha \in [0,1]$ of channel uses to coherent-state encoding and the remaining fraction $1-\alpha$ to thermal-state transmission. The capacity-power trade-off of coherent-state encoding is characterized by the coherent-state capacity-power function ${C_Q^{\mathrm{coh,bos}}}$ derived in Section~IV-B. {Let $B_{\mathrm{coh}}\in[0,B_{\max}^{\mathrm{bos}}]$ denote the per-use EH target under coherent-state encoding, $B_{\mathrm{th}}\in[0,B_{\max}^{\mathrm{bos}}]$ the per-use EH target under thermal-state encoding, and $\mathcal{B}:=[0,B_{\max}^{\mathrm{bos}}]\times[0,B_{\max}^{\mathrm{bos}}]$ the joint feasible EH set.} During the coherent transmission phase, information is conveyed at rate ${C_Q^{\mathrm{coh,bos}}}(B_{\mathrm{coh}})$ while delivering harvested energy $B_{\mathrm{coh}}$ per channel use. During the thermal phase, no information is transmitted, but a higher harvested energy level $B_{\mathrm{th}}$, corresponding to thermal-state transmission, can be delivered.}
     {
\begin{equation}
\bigl(C,B\bigr)
=
\Bigl(
\alpha\, {C_Q^{\mathrm{coh,bos}}}(B_{\mathrm{coh}}),
\;
\alpha B_{\mathrm{coh}} + (1-\alpha) B_{\mathrm{th}}
\Bigr),
\label{eq:timesharing_region}
\end{equation}
}
{
for all feasible $\alpha \in [0,1]$ and $(B_{\mathrm{coh}},B_{\mathrm{th}}) \in \mathcal{B}$, where $\mathcal{B}$ denotes the set of harvested energy levels achievable under coherent and thermal encoding, respectively. This construction yields the convex hull of the achievable points obtained with coherent and thermal states alone. As a result, time-sharing  enlarges the achievable-rate-power region, particularly in the high EH regime where purely coherent encoding suffers from severe capacity degradation.}
{
\subsubsection{Hybrid Encoding via Mixed Quantum States}
Beyond time-sharing across channel uses, we next consider a hybrid encoding strategy in which the transmitter employs a convex mixture of information-carrying and energy-carrying input states within a single channel use. Specifically, we consider input ensembles described by mixed quantum states of the form
\begin{equation}
\rho = p\, \rho_{\mathrm{coh}} + (1-p)\, \rho_{\mathrm{th}},
\end{equation}
where $\rho_{\mathrm{coh}}$ denotes the ensemble-average input state induced by coherent-state encoding, $\rho_{\mathrm{th}}$ denotes a thermal input state optimized for energy harvesting as characterized in Remark~10, and $p \in [0,1]$ controls the mixing ratio. For such hybrid inputs, the EH constraint couples linearly to $p$, while the achievable information rate depends nonlinearly on the effective photon budget allocated to coherent modulation. Let $C_Q^{\mathrm{hybrid,bos}}(B)$ denote the achievable-rate-power  function under hybrid mixed-state encoding over the lossy bosonic channel. Then, the following lower bound holds:}
{
\begin{equation}
C_Q^{\mathrm{hybrid,bos}}(B)
\geq
\max_{p \in [0,1]}
\left\{
p\, {C_Q^{\mathrm{coh,bos}}}\!\left(\frac{B-(1-p)B_{\mathrm{th}}}{p}\right)
\right\},
\label{eq:hybrid_bound}
\end{equation}
whenever the per-coherent-use harvested energy $\bigl(B-(1-p)B_{\mathrm{th}}\bigr)/p$
lies in $[0,B_{\max}^{\mathrm{bos}}]$, i.e., whenever
$B\ge(1-p)B_{\mathrm{th}}$ and $B\le(1-p)B_{\mathrm{th}}+p\,B_{\max}^{\mathrm{bos}}$,
ensuring both non-negativity of the argument and admissibility within the
domain of $C_Q^{\mathrm{coh,bos}}(\cdot)$. Due to the strict concavity of ${C_Q^{\mathrm{coh,bos}}}(B)$ established in Theorem~\ref{thm:bosonic_coh}, this hybrid strategy can outperform conventional time-sharing in certain EH regimes, yielding higher achievable information rates under the same EH requirement.}

\section{Numerical Analysis}
\subsection{DV case}
Fig.~3 illustrates the capacity-power function for the amplitude damping channel under an EH constraint. We consider three representative input states: the ground state $|0\rangle$ with energy $E_0=0$, the excited state $|1\rangle$ with energy $E_1=1$, and the equal superposition $\tfrac{1}{\sqrt{2}}(|0\rangle+|1\rangle)$ with mean energy $\mathrm{Tr}[H\rho]=1/2$, under the standard qubit normalization of Section~II-A. The plot shows a gradual decrease in capacity (in bits per channel use) as the EH requirement $B$ increases, reflecting the fundamental trade-off between information rate and harvested energy. For low $B$, the EH constraint is inactive and the unconstrained Holevo capacity is achieved. Beyond a critical threshold, the capacity decreases as the optimal input ensemble is progressively biased toward energy-carrying states.

Fig. 4 {depicts the capacity-power function $C_Q(B)$ of the quantum erasure channel under the standard qubit normalization $E_0=0$, $E_1=1$, with representative erasure probability $p_e=0.1$ and erasure-flag energy $E_e=0$. The curve exhibits the two regimes predicted by Theorem~2. For $B\le B^\star=(1-p_e)/2+p_eE_e$, the EH constraint is inactive and $C_Q(B)=1-p_e$. For $B^\star<B\le B_{\max}=(1-p_e)+p_eE_e$, the constraint is active and
\begin{equation*}
C_Q(B)=(1-p_e)H_2\!\left(\frac{B-p_eE_e}{1-p_e}\right).
\end{equation*}
This active-regime branch is strictly concave and monotonically decreasing. It leaves the plateau with zero slope at $B^\star$, since $H_2(\cdot)$ is maximized at $1/2$, and becomes increasingly steep as $B\to B_{\max}^-$. The figure therefore makes explicit that the EH requirement is met without rate loss below $B^\star$, whereas above $B^\star$ the input must be progressively biased toward the excited state at a corresponding loss in classical distinguishability.}
\subsection{CV case}
Fig.~5 {depicts the capacity-power and achievable-rate–power trade-offs $C_Q(B)$ for the lossy bosonic channel under three encoding families, with simulation parameters $T=0.8$, $n_{\mathrm{env}}=0.5$, $\eta_h=0.9$, and $N_{\max}=5$ photons/mode. The coherent-state curve, characterized in closed form by Theorem~\ref{thm:bosonic_coh} as the channel capacity, is strictly concave and monotonically decreasing throughout $B\in[0,\eta_h {N_{\mathrm{out}}^{\max,\mathrm{bos}}}]$, vanishing at the feasibility boundary where $\tau^\star\to 0$. The squeezed-state curve at $r=1.0$ lies strictly below the coherent-state baseline over the \emph{entire} feasible interval $[0,B_{\max}^{\mathrm{bos}}]$, reaching zero only at the same boundary $B_{\max}^{\mathrm{bos}}$. Under the generalized BS receiver with jointly optimized $\tau\in[0,1]$, the EH constraint is always satisfiable at any $B\in[0,B_{\max}^{\mathrm{bos}}]$ by adjusting $\tau$, and the performance gap between squeezed and coherent encodings is attributed solely to the reduction of the modulation budget $N_{\mathrm{mod}}=N_{\max}-\sinh^2(r)$  induced by squeezing, consistent with Remark 5. Thermal states yield $C_Q(B)=0$ throughout, confirming their role as energy-only carriers (Remark 7).}

Fig.~6 {shows the capacity-power trade-off for the AGN channel, characterized in Theorem~\ref{thm:AGN_coh} under the same transmitter budget and EH efficiency, with $N_0=0.5$ photons/mode. The coherent-state curve exhibits the same qualitative structure as in Fig.~5, i.e., strict concavity and monotone decrease via $\tau^\star(B)$ with two structural differences attributable to the additive-noise mechanism. First, the unconstrained Holevo capacity at $B=0$ is strictly lower than in the lossy bosonic case, since additive noise contributes an irreducible background entropy $g(N_0)$ at the communication port independent of the input power. Second, the feasibility horizon $B_{\max}^{\mathrm{AGN}}=\eta_h(N_{\max}+N_0)$ extends beyond its bosonic counterpart $B_{\max}^{\mathrm{bos}}=\eta_h {N_{\mathrm{out}}^{\max,\mathrm{bos}}}$, as the additive-noise photons contribute to the EH port without requiring a reduction of the transmitter power. The squeezed-state curve again lies strictly below the coherent-state baseline, consistent with the structural analysis of Section~IV-C, while the thermal-state curve forms a horizontal line at $C_Q(B)=0$, reflecting the complete decoupling between EH and information transfer in maximally mixed states (Remark 10).}

Fig.~7 {compares the coherent-state capacity-power functions of the lossy bosonic and AGN channels under matched parameters. Two analytically distinct effects govern the relative ordering of the curves. At low EH demand, the lossy bosonic curve dominates: although multiplicative attenuation reduces the total output photon number, it attenuates the noise floor proportionally, yielding a higher net Holevo information than the AGN channel, whose additive noise contributes an irreducible entropy offset at the communication port independent of the input power. At high EH demand, the ordering reverses: the additive noise photons of the AGN channel contribute directly to the harvestable energy without consuming any portion of the transmitter budget $N_{\max}$, extending the feasibility horizon beyond that of the lossy bosonic channel. Since the lossy bosonic curve is strictly higher at $B=0$ and vanishes strictly earlier, while both functions are continuous and strictly concave (Theorems~\ref{thm:bosonic_coh} and~\ref{thm:AGN_coh}), implies the existence of at least one crossing in the common feasible range, and the numerical evaluation in Fig. 7 shows a single interior crossing  at $B^\star\approx 2.17$ photons/mode. This crossing partitions the feasible region into a rate-limited regime ($B<B^\star$, lossy bosonic preferred) and an energy-limited regime ($B>B^\star$, AGN preferred), constituting a structural signature of the fundamental distinction between multiplicative attenuation and additive noise in the QSIPT setting.}

{The key difference between the DV and CV results is the significantly sharper reduction of the capacity-power function observed in DV channels as the EH requirement increases. This behavior is not specific to the particular DV channels considered, but instead reflects a fundamental distinction between finite and infinite dimensional quantum systems. In DV channels, communication occurs over a finite-dimensional Hilbert space with discrete energy eigenvalues, and the capacity-power optimization is restricted to finite input ensembles. As the EH constraint tightens, the optimal ensemble may undergo abrupt structural changes, shifting from capacity-optimal mixtures toward energy-biased signaling strategies, which leads to steep drops or plateaus in the capacity-power curve. In contrast, CV channels admit a continuous energy spectrum and allow the optimal input states to be smoothly parameterized by the average photon number. Therefore, increasing the EH requirement induces continuous adjustments in the optimal input parameters, resulting in a smoother and typically differentiable capacity-power trade-off. This distinction highlights the role of dimensionality and energy quantization in shaping the information-energy trade-off.}

\begin{figure}[t!]
    \centering
    \includegraphics[width=\linewidth]{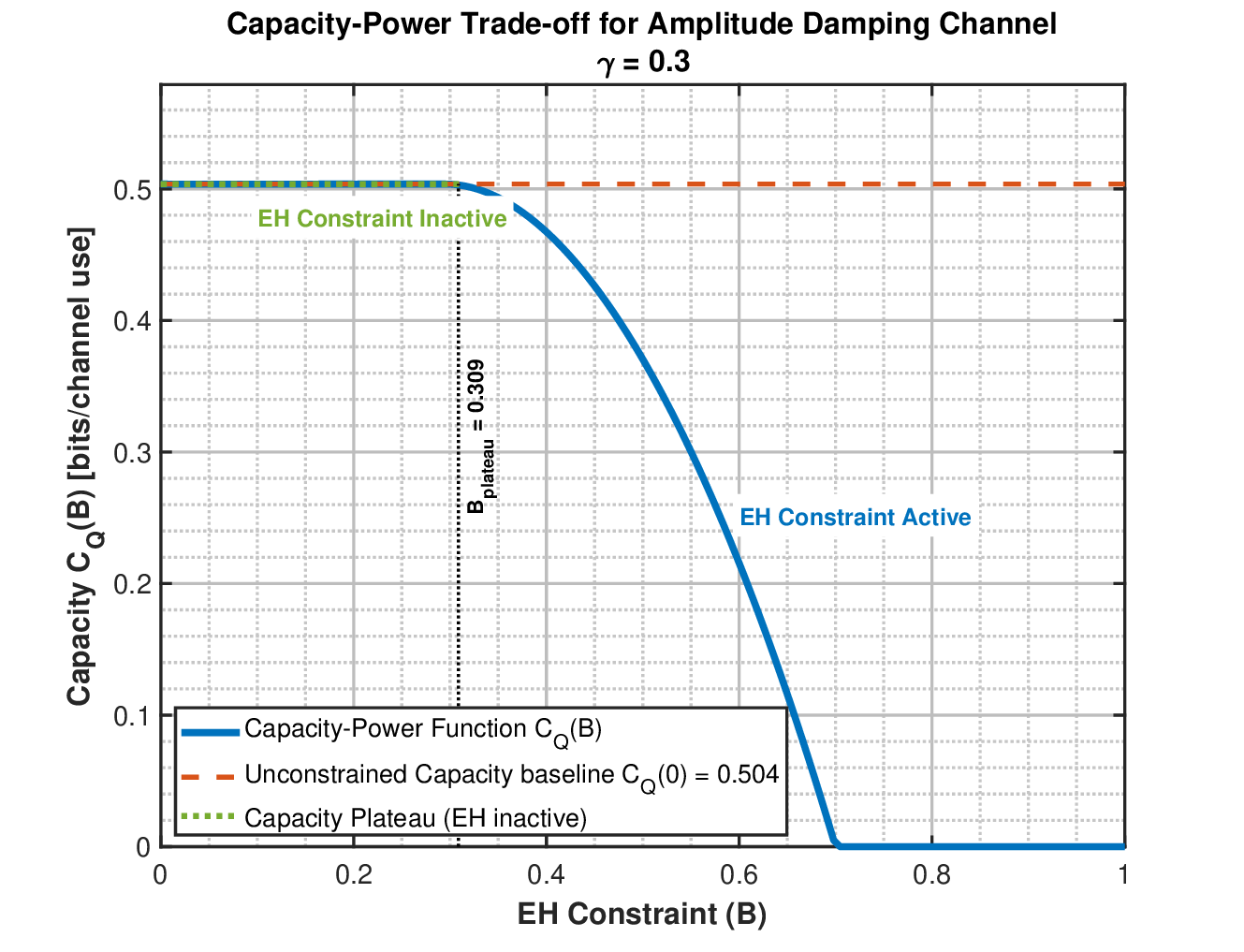}
    \caption{{Capacity-power function $C_Q(B)$ of the amplitude damping channel with damping parameter $\gamma=0.3$, under the standard qubit energy normalization $E_0=0$, $E_1=1$.}}
    \label{Damp}
\end{figure}

\begin{figure}[t!]
    \centering
    \includegraphics[width=\linewidth]{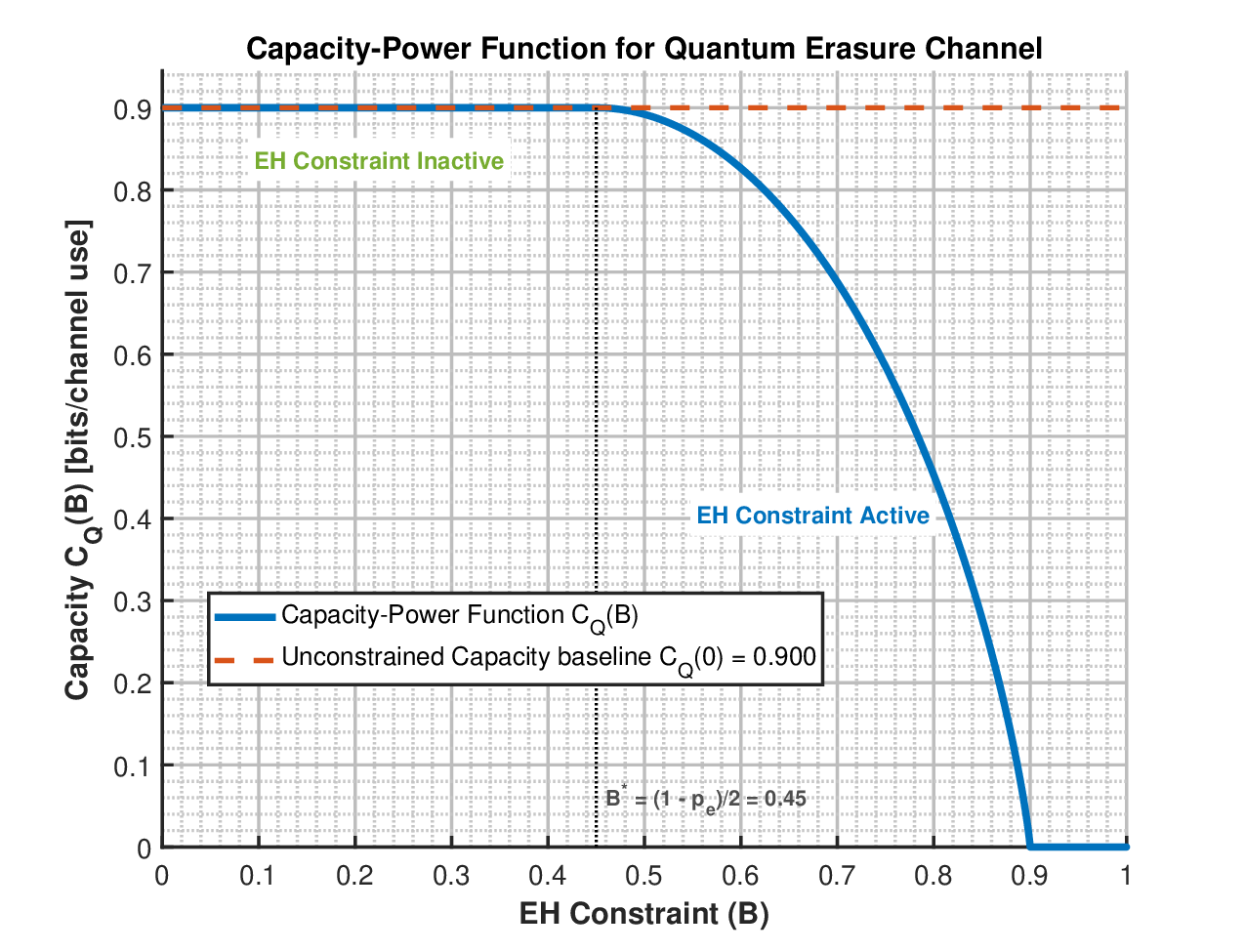}
    \caption{{Capacity-power function $C_Q(B)$ of the quantum erasure channel under $E_0=0$, $E_1=1$, $E_e=0$, and representative $p_e=0.1$. The threshold $B^\star=\frac{1-p_e}{2}+p_eE_e=0.45$ separates the inactive-EH plateau $C_Q(B)=1-p_e$ from the strictly concave active-EH regime.}}
    \label{ErasureFig}
\end{figure}

\begin{figure}[t!]
    \centering
    \includegraphics[width=0.5\textwidth]{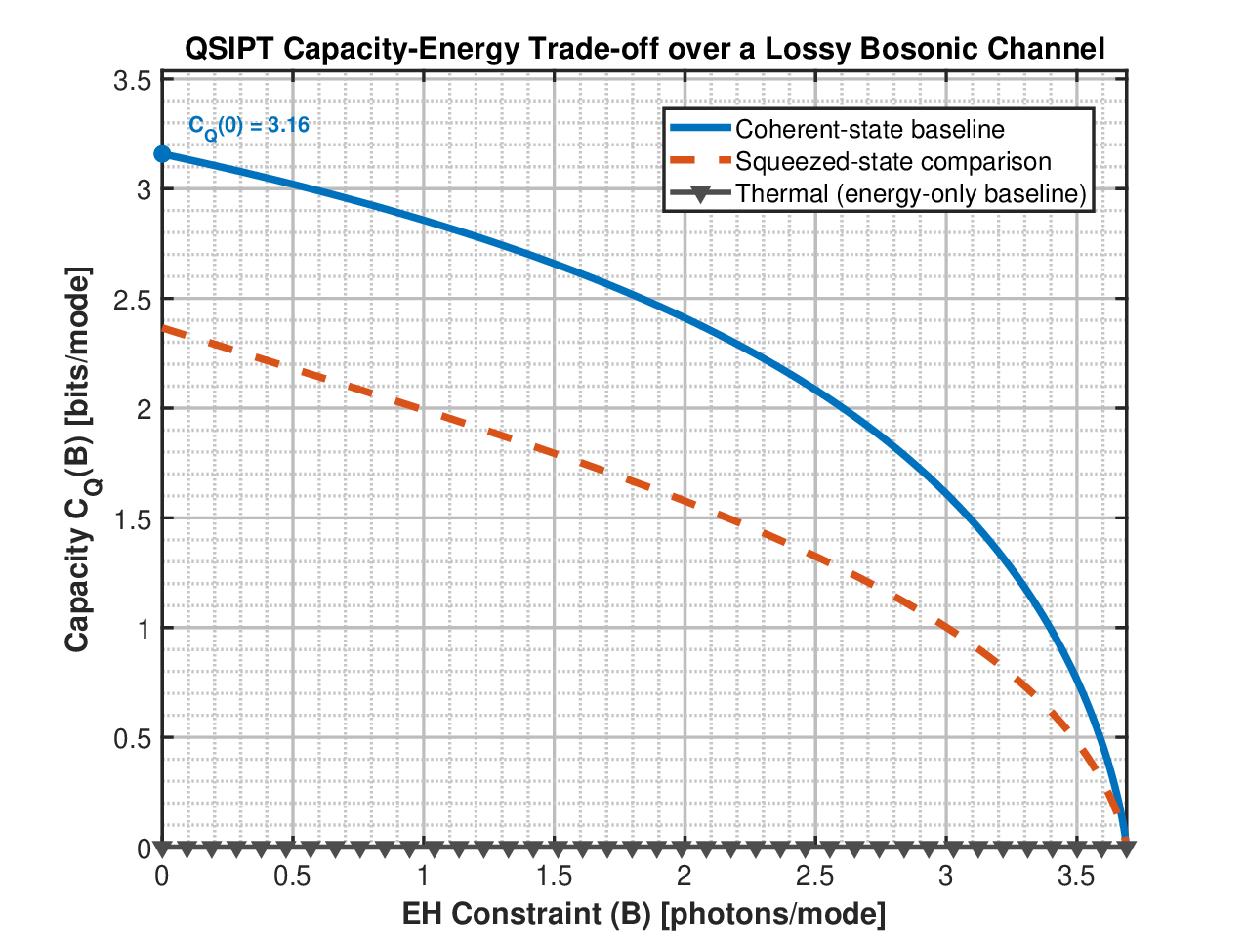}
    \caption{{Capacity-power trade-off for the lossy bosonic channel ($T=0.8$, $n_{\mathrm{env}}=0.5$, $\eta_h=0.9$, $N_{\max}=5$): coherent-state baseline (Theorem~\ref{thm:bosonic_coh}), displaced squeezed-state envelope at $r=1.0$ (Proposition~\ref{prop:bosonic_sq}), and thermal-state baseline.}}
    \label{fig:squeezed}
\end{figure}

\begin{figure}[t!]
    \centering
    \includegraphics[width=0.5\textwidth]{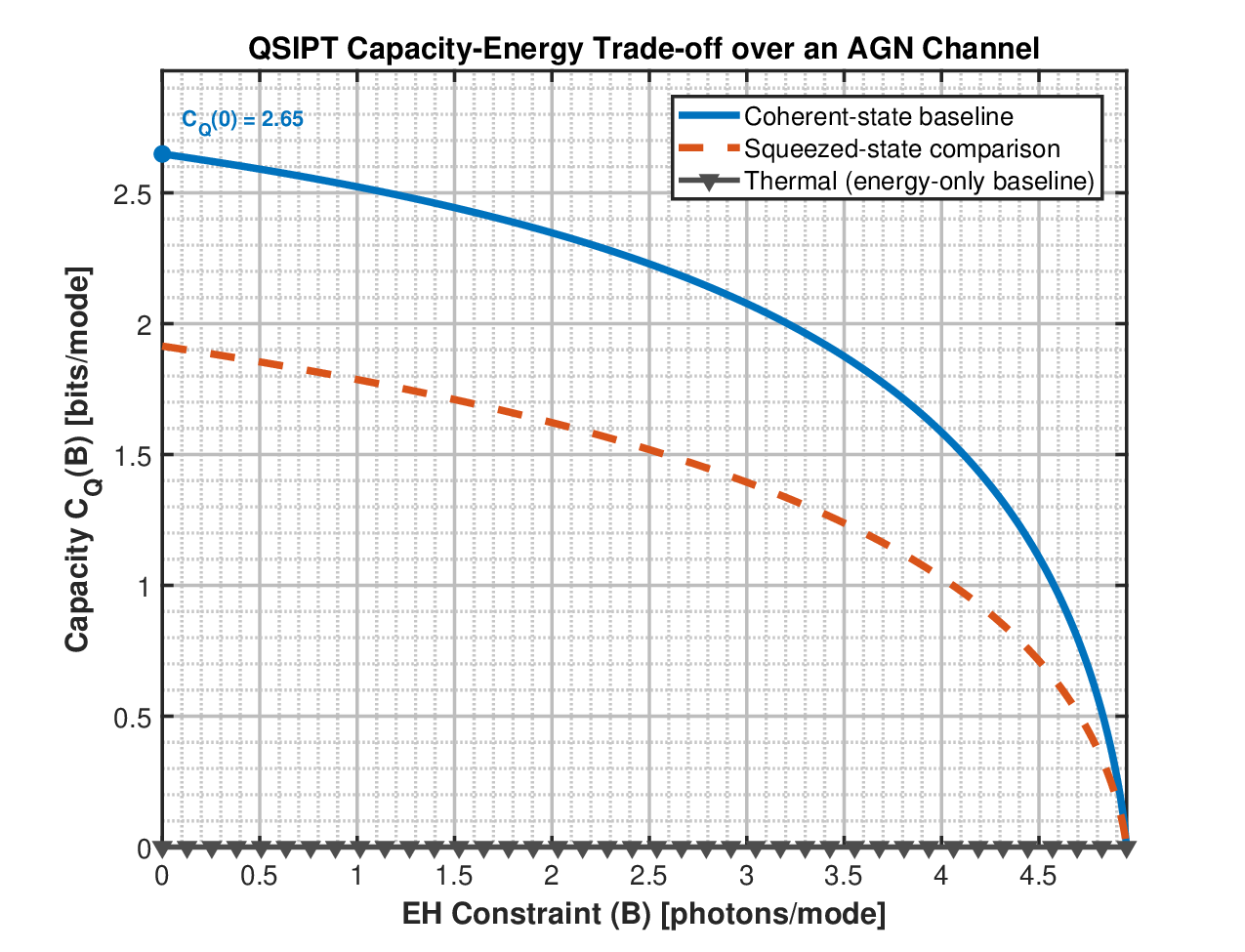}
     \caption{{Capacity-power trade-off for the AGN channel under $N_{\max}=5$, $\eta_h=0.9$, $N_0=0.5$: coherent-state baseline (Theorem~\ref{thm:AGN_coh}), displaced squeezed-state envelope at $r=1.0$, and thermal-state baseline.}}
    \label{fig:coherent}
\end{figure}

\begin{figure}[t!]
    \centering
    \includegraphics[width=0.5\textwidth]{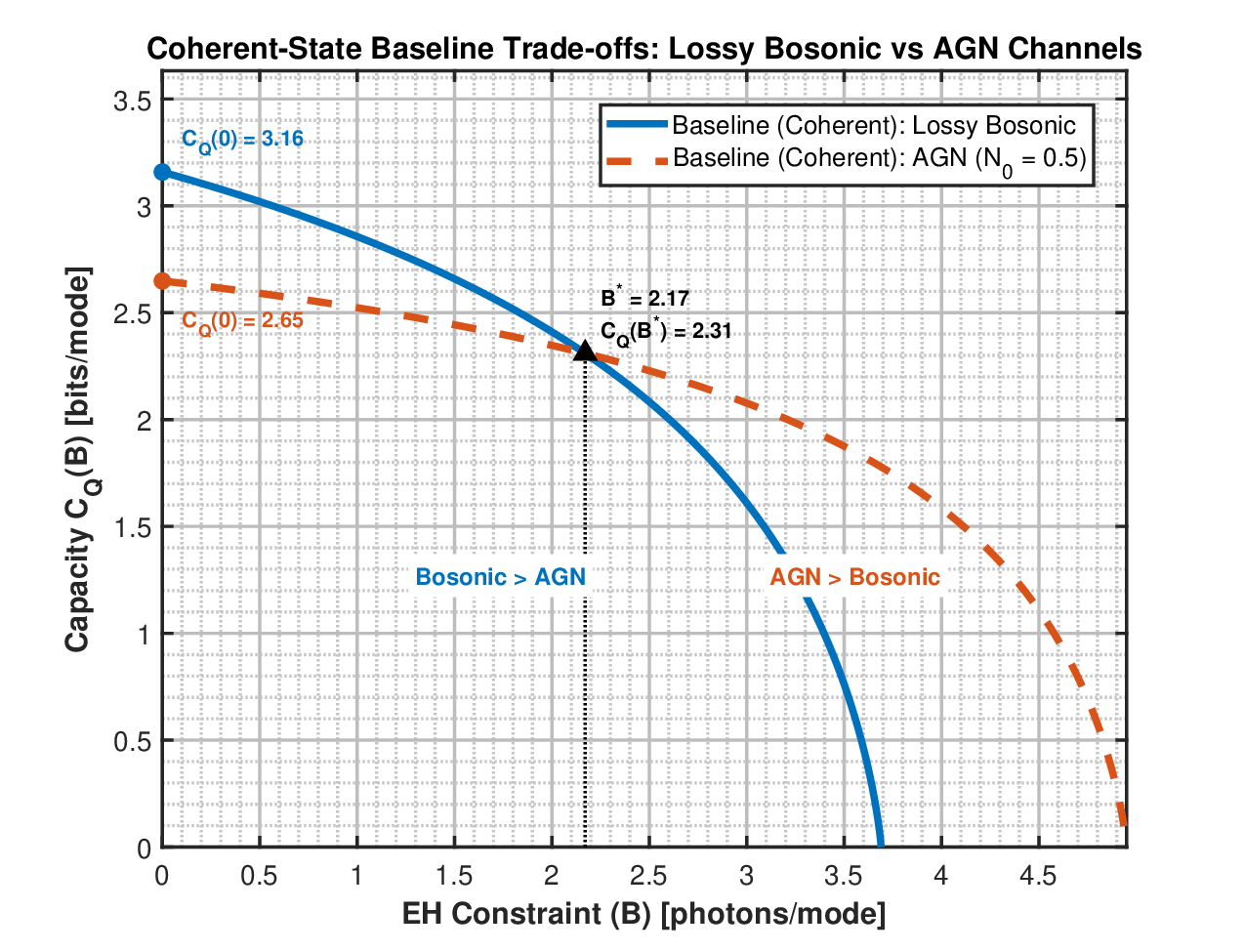}
     \caption{{Coherent-state capacity-power trade-offs of the lossy bosonic ($T=0.8$, $n_{\mathrm{env}}=0.5$) and AGN ($N_0=0.5$) channels under a common transmitter photon-number budget $N_{\max}=5$ and EH efficiency $\eta_h=0.9$.}}
    \label{fig:bosvsAGN}
\end{figure}

\section{Conclusion}

{This paper proposes a unified framework for QSIPT, introducing a quantum-theoretic approach to characterize the fundamental limits of simultaneous information and energy transfer in quantum communication systems. We studied the capacity-power function $C_Q(B)$, which quantifies the fundamental trade-off between classical information rate and harvested energy for both DV and CV quantum channels. For DV channels, analytical upper and lower bounds on $C_Q(B)$ are derived for the amplitude damping channel, while an exact closed-form characterization is obtained for the quantum erasure channel, where $C_Q(B)$ is shown to be strictly concave and monotonically decreasing in the active EH regime, governed by progressive biasing of the optimal input ensemble. For CV channels, a generalized BS receiver with jointly optimized adjustable transmissivity and transmitter photon-number budget yields a well-posed framework. Specifically,  closed-form capacity-power functions are established for both the lossy bosonic and AGN channels under coherent-state encoding. The trade-off is realized entirely through the receiver architecture: the optimal transmissivity $\tau^\star(B)$ decreases affinely with the EH requirement, producing a strictly concave, monotonically decreasing $C_Q(B)$. Within the considered displaced Gaussian encoding class, coherent-state encoding is found to be the best-performing input across all tested parameter configurations, a finding consistent with the phase-insensitive nature of the channel and the passive receiver architecture, which do not enable the exploitation of quadrature asymmetry introduced by squeezing. A comparison of the two CV models reveals a single interior crossing of their capacity-power curves, identifying a rate-limited regime favoring the lossy bosonic channel and an energy-limited regime favoring the AGN channel, a structural property of the distinct roles of multiplicative attenuation and additive noise in QSIPT. These results provide a mathematical framework for the design of quantum communication systems operating under EH constraints.} {The proposed capacity-power function naturally extends to multi-user and network quantum communication settings, where it generalizes to a capacity-power region characterizing the jointly achievable information rates and harvested energy levels across users. In quantum broadcast and multiple-access channels, this region couples message rates and individual or common energy constraints, requiring a balance between signal distinguishability for reliable decoding and coherent signal combination for efficient energy transfer \cite{Winter_2001,Sen_2012}. More general network scenarios, such as quantum interference channels, introduce additional challenges due to quantum interference and coordination, making the complete characterization of multi-user QSIPT an open problem for future research \cite{Carleial_1978,Wilde_2013}.} {From an application perspective, the proposed QSIPT framework provides fundamental performance limits for energy-constrained quantum receivers operating over canonical DV and CV channel models, such as amplitude damping, lossy bosonic, and additive Gaussian noise channels. These models are widely used abstractions of practical optical and solid-state quantum communication links, capturing  physical effects including spontaneous emission loss, optical attenuation, and phase-insensitive detection noise \cite{b1, b18, Pirandola_2018}. In this context, the capacity-power function  characterizes the fundamental trade-off between classical information extraction and usable energy transfer in scenarios such as low-power quantum sensing and short-range optical quantum links, where receivers must operate under strict photon or energy budgets \cite{Degen_2017,Wang_2020}. While the present work focuses on information-theoretic characterization rather than hardware realization, experimental implementation and validation of QSIPT architectures remain important directions for future research.}

\begin{appendices}
\section{Proof of Theorem 1}
\label{appendix:ad_bounds}

{The proof establishes the upper and lower bounds separately, without claim of equality between them.}

\subsubsection{{Upper Bound (Converse)}}
{\textbf{Step~1 (diagonal inputs are without loss of generality for the converse).}
Consider an arbitrary input state
\[
\rho=\begin{pmatrix} a & c\\ c^* & 1-a\end{pmatrix},\qquad |c|^2\le a(1-a).
\]
Under $\mathcal{N}_\gamma$,
\[
\mathcal{N}_\gamma(\rho) =
\begin{pmatrix}
a+\gamma(1-a) & \sqrt{1-\gamma}\,c\\
\sqrt{1-\gamma}\,c^* & (1-\gamma)(1-a)
\end{pmatrix}.
\]
The harvested energy $\mathrm{Tr}[H\mathcal{N}_\gamma(\rho)]=(1-\gamma)(1-a)$ depends only on $a$ and is independent of $c$. {Setting $d_0:=a+\gamma(1-a)$ and $d_1:=(1-\gamma)(1-a)$ for the diagonal entries of $\mathcal{N}_\gamma(\rho)$,} the eigenvalues of $\mathcal{N}_\gamma(\rho)$ are
\[
\lambda_\pm = \tfrac12\!\left(1\pm\sqrt{({d_0}-{d_1})^2+4(1-\gamma)|c|^2}\right),
\]
where ${d_0}-{d_1}=(2a-1)+2\gamma(1-a)$. The larger eigenvalue $\lambda_+\ge\tfrac12$ is strictly increasing in $|c|^2$ (since $1-\gamma>0$), and since $H_2$ is strictly decreasing on $(\tfrac12,1)$, the output entropy $S(\mathcal{N}_\gamma(\rho))=H_2(\lambda_+)$ is strictly decreasing in $|c|^2$. Hence, for any fixed $a$ satisfying the EH constraint, the output entropy is maximized at $c=0$, yielding the diagonal output matrix $\mathrm{diag}({d_0},{d_1})$ with entropy $H_2({d_0})=H_2({d_1})=H_2((1-\gamma)(1-a))$.}

{\textbf{Step~2 (derivation of the upper bound).}
For any ensemble $\{p_x,\rho_x\}$ with average state $\bar{\rho}$, the Holevo quantity satisfies
\begin{equation}
\chi=S(\mathcal{N}_\gamma(\bar{\rho}))-\sum_x p_x S(\mathcal{N}_\gamma(\rho_x))\le S(\mathcal{N}_\gamma(\bar{\rho})).
\end{equation}
Let $\bar{a}$ denote the ground-state population of $\bar{\rho}$. The harvested energy constraint $\mathrm{Tr}[H\mathcal{N}_\gamma(\bar{\rho})]=(1-\gamma)(1-\bar{a})\ge B$ gives $\bar{a}\le 1-B/(1-\gamma)$. By Step~1, $S(\mathcal{N}_\gamma(\bar{\rho}))$ is maximized by setting the off-diagonal of $\bar{\rho}$ to zero, at which point $S(\mathcal{N}_\gamma(\bar{\rho}))=H_2((1-\gamma)(1-\bar{a}))$. Maximizing over $\bar{a}\in[0,1-B/(1-\gamma)]$ establishes the upper bound~\eqref{eq:upper_bound_AD}.}

\subsubsection{{Lower Bound (Achievability via Binary Ensemble)}}
{\textbf{Step~3 (the binary ensemble and its Holevo information).}
Consider the pure-state binary ensemble $\mathcal{E}_{\mathrm{bin}}=\{(\tfrac12,|0\rangle\langle 0|),(\tfrac12,|1\rangle\langle 1|)\}$. The amplitude damping channel maps each component as
\[
\mathcal{N}_\gamma(|0\rangle\langle 0|)=|0\rangle\langle 0|,\qquad
\mathcal{N}_\gamma(|1\rangle\langle 1|)=\mathrm{diag}(\gamma,1-\gamma).
\]
The average output state is
\[
\bar{\rho}_{\mathrm{out}}=\tfrac12|0\rangle\langle 0|+\tfrac12\mathcal{N}_\gamma(|1\rangle\langle 1|)=\mathrm{diag}\!\left(\tfrac{1+\gamma}{2},\tfrac{1-\gamma}{2}\right),
\]
with $S(\bar{\rho}_{\mathrm{out}})=H_2((1-\gamma)/2)$. Since $S(|0\rangle\langle 0|)=0$ and $S(\mathcal{N}_\gamma(|1\rangle\langle 1|))=H_2(1-\gamma)$, the Holevo information is
\begin{equation}
\chi(\mathcal{E}_{\mathrm{bin}})=H_2\!\left(\tfrac{1-\gamma}{2}\right)-\tfrac12 H_2(1-\gamma).
\end{equation}}

{\textbf{Step~4 (verification of the EH constraint).}
The expected output energy of $\mathcal{E}_{\mathrm{bin}}$ is $\mathrm{Tr}[H\bar{\rho}_{\mathrm{out}}]=(1-\gamma)/2$, so $\mathcal{E}_{\mathrm{bin}}$ is feasible for all $B\le(1-\gamma)/2$. Since $C_Q(B)$ is non-increasing in $B$ (Proposition~1), the lower bound~\eqref{eq:lower_bound_AD} holds throughout this range. This completes the proof.}

 \section{Proof of Proposition 5}
 \label{ProofOfProp}
 The classical capacity of the quantum erasure channel is determined by the {Holevo bound}, which provides an upper limit on the amount of classical information that can be transmitted through a quantum channel.

The Holevo bound is given by the following general expression
\begin{equation}
\chi = S\left( \sum_i p_i \rho_i \right) - \sum_i p_i S(\rho_i),
\end{equation}
where  \( \rho_i \) are the input quantum states that are transmitted through the channel,
     \( p_i \) are the probabilities of sending the corresponding states \( \rho_i \),
     \( S(\rho) \) is the Von Neumann entropy of a quantum state \( \rho \), defined as in \eqref{eq:vn_entropy}, and  \( \sum_i p_i S(\rho_i) \) is the average entropy of the individual states. For the quantum erasure channel, qubits are either transmitted successfully with probability \( 1 - p_e \), or they are erased with probability \( p_e \), where \( p_e \) is the erasure probability. The receiver is informed if the qubit is erased through an erasure flag. We assume the input states \( \rho_i \) are pure states. In the case of qubits, we can assume two typical input states \( |0\rangle \) and \( |1\rangle \), represented as density matrices
\begin{equation}
\rho_0 = |0\rangle \langle 0|, \quad \rho_1 = |1\rangle \langle 1|.
\end{equation}
Since these are pure states, their Von Neumann entropy is given by
\begin{equation}
S(\rho_i) = 0 \quad \text{(for pure states)}, \hspace{1ex} i\in\{0,1\}.
\end{equation}

The quantum erasure channel transforms the input state \( \rho \) into an output state \( \mathcal{N}(\rho) \) according to the following rule
\begin{equation}
\mathcal{E}(\rho) = (1 - p_e) \rho + p_e |e\rangle \langle e|,
\end{equation}
where  \( p_e \) is the erasure probability and 
     \( |e\rangle \) is the erasure flag state that indicates the qubit has been erased. Specifically,  the state is transmitted successfully with probability \( 1 - p_e \), while with probability \( p_e \), it is replaced by the erasure flag.
The ensemble-averaged state is given by the mixture of all possible output states after transmission through the channel. For an ensemble of pure input states \( \rho_i \) with probabilities \( p_i \), the output ensemble is written as 
\begin{equation}
\rho_{\text{avg}} = (1 - p_e) \sum_i p_i \rho_i + p_e |e\rangle \langle e|.
\end{equation}
Since \( S(\rho_i) = 0 \) for the  pure states, we only need to compute the entropy of the ensemble-averaged output state.
The output state of the channel is a probabilistic mixture of the transmitted quantum state and the erasure flag state. The entropy of the output state \( \mathcal{N}(\rho) \) is computed as
\begin{equation}
S(\mathcal{N}(\rho)) = H(p_e),
\end{equation}
where \( H(p_e) \) is the Shannon entropy of the probability distribution \( \{ p_e, 1 - p_e \} \), given by
\begin{equation}
H(p_e) = -p_e \log_2 p_e - (1 - p_e) \log_2 (1 - p_e).
\end{equation}
Since the input states \( \rho_i \) are pure states, their Von Neumann entropy \( S(\rho_i) \) is zero. Therefore, the Holevo bound simplifies to
\begin{equation}
\chi = S(\rho) = H(p_e).
\end{equation}

The classical capacity \( C \) of the quantum erasure channel is obtained by maximizing the Holevo bound over all possible input ensembles. For qubits, the maximum capacity is achieved when the input states are maximally mixed, and the capacity is given by
\begin{equation}
C = (1 - p_e) \log_2 d,
\end{equation}
where \( d \) is the dimension of the quantum system. For a single qubit quantum system with \( d = 2 \), and the capacity simplifies to
\begin{equation}
C = 1 - p_e.
\end{equation}

\section{Proof of Theorem 2}
\label{appendix:erasure}
{We adopt the standard qubit normalization $E_0=0$, $E_1=1$ throughout, and let $E_e\ge 0$ denote the erasure-flag energy.} We parametrize the input state as follows
\begin{equation}
\rho = \begin{pmatrix}
a & c \\
c^* & 1 - a
\end{pmatrix}, \quad a \in [0, 1], \quad |c|^2 \leq a(1 - a), \label{eq:rho_erasure}
\end{equation}
where \( a \) is the ground-state population. The output of the erasure channel is
        \begin{equation}
\rho_{\mathrm{out}} = (1 - p_e) \rho + p_e |e\rangle \langle e|. \label{eq:erasure_output}
\end{equation}

{\textbf{Step~1 (reduction to diagonal inputs).}
Since $|e\rangle$ is orthogonal to the qubit subspace $\mathrm{span}\{|0\rangle,|1\rangle\}$, the Holevo information of the output ensemble satisfies
\begin{equation}
\chi=(1-p_e)\,\chi_{\mathrm{in}},
\end{equation}
where $\chi_{\mathrm{in}}=S(\rho)-\sum_x p_x S(\rho_x)$ denotes the Holevo information of the input ensemble. For the optimization of $\chi_{\mathrm{in}}$, it suffices to restrict the optimization to pure-state ensembles. The von Neumann entropy of $\rho$ is $S(\rho)=H_2(\lambda)$ with
\[
\lambda=\tfrac12\bigl(1+\sqrt{(2a-1)^2+4|c|^2}\bigr),
\]
which is maximized for fixed $a$ at $c=0$, yielding $S(\rho)\big|_{c=0}=H_2(a)$. Therefore,
\begin{equation}
C_Q(B)=\!\!\!\max_{\substack{0\le a\le 1\\ (1-p_e)(1-a)+p_eE_e\ge B}}\!\!\!(1-p_e)H_2(a).
\label{eq:erasure_cq}
\end{equation}}

{\textbf{Step~2 (EH constraint and $a_{\max}(B)$).}
With $E_0=0$, $E_1=1$, the expected output energy is
\begin{IEEEeqnarray}{rCl}
\nonumber
\mathrm{Tr}[H\rho_{\mathrm{out}}]&=&(1-p_e)\bigl(a\cdot 0+(1-a)\cdot 1\bigr)+p_eE_e\\
\label{eq:erasure_energy}
&=&(1-p_e)(1-a)+p_eE_e.
\end{IEEEeqnarray}
The constraint $\mathrm{Tr}[H\rho_{\mathrm{out}}]\ge B$ is equivalent to
\begin{equation}
a\le a_{\max}(B):=1-\frac{B-p_eE_e}{1-p_e}.
\label{eq:erasure_amax}
\end{equation}
Note that $a_{\max}(B)$ is affine and strictly decreasing in $B$, and feasibility requires $B\le(1-p_e)+p_eE_e$.}

{\textbf{Step~3 (optimization over $a$).}
The optimization in \eqref{eq:erasure_cq} reduces to $\max_{0\le a\le a_{\max}(B)}(1-p_e)H_2(a)$. Since $H_2$ is symmetric about $a=\tfrac12$ with unique maximum $H_2(\tfrac12)=1$:}

{\emph{Case~1 (inactive EH constraint).}
If $a_{\max}(B)\ge\tfrac12$, the unconstrained maximizer $a^\star=\tfrac12$ is feasible. The condition is
\begin{equation}
1-\frac{B-p_eE_e}{1-p_e}\ge\tfrac12\;\Longleftrightarrow\; B\le \frac{1-p_e}{2}+p_eE_e.
\end{equation}
In this regime, $C_Q(B)=(1-p_e)H_2(\tfrac12)=1-p_e$.}

{\emph{Case~2 (active EH constraint).}
If $B>(1-p_e)/2+p_eE_e$, then $a_{\max}(B)<\tfrac12$, and the constrained maximum is at the boundary $a=a_{\max}(B)$. Using the symmetry $H_2(x)=H_2(1-x)$:
\begin{equation}
C_Q(B)=(1-p_e)H_2(a_{\max}(B))=(1-p_e)H_2\!\left(\frac{B-p_eE_e}{1-p_e}\right).
\end{equation}
Combining both cases yields the formula in Theorem~2.}

\section{Proof of Theorem \ref{thm:bosonic_coh}}
\label{ProofofHelovo}
{Consider a lossy bosonic channel with transmissivity $T\in[0,1]$ and environmental thermal noise with mean photon number $n_{\mathrm{env}}$, followed by a generalized BS receiver with adjustable transmissivity $\tau\in[0,1]$. For a coherent-state ensemble with average input photon number $N_s\in[0,N_{\max}]$, the mean photon number at the channel output is
\begin{equation}
N_{\mathrm{out}}^{\mathrm{bos}}(N_s)=TN_s+(1-T)n_{\mathrm{env}},
\label{eq:appE_Nout}
\end{equation}
each output state is a displaced thermal state with identical covariance, and the comm-port Holevo information is given by~\eqref{eq:bosonic_holevo_comm}. The harvested energy per channel use, modeled at the EH port of the generalized BS, is
\begin{equation}
E_h(N_s,\tau)=\eta_h(1-\tau) {N_{\mathrm{out}}^{\mathrm{bos}}(N_s)}.
\label{eq:appE_EH}
\end{equation}
We establish Theorem~\ref{thm:bosonic_coh} through two lemmas, corresponding to the optimal photon number and the optimal BS transmissivity, respectively.}

{\textbf{Lemma~D.1 (Optimality of $N_s^\star=N_{\max}$).}
\emph{For fixed $\tau\in(0,1]$, the comm-port Holevo information ${\chi_{\mathrm{comm}}^{\mathrm{coh,bos}}}(N_s,\tau)$ is strictly increasing in $N_s\in[0,N_{\max}]$. Consequently, the optimal photon number is $N_s^\star=N_{\max}$.}}

{\emph{Proof.} For fixed $\tau\in(0,1]$,
\[
{\chi_{\mathrm{comm}}^{\mathrm{coh,bos}}}(N_s,\tau)=g\bigl(\tau {N_{\mathrm{out}}^{\mathrm{bos}}}(N_s)\bigr)-g\bigl(\tau(1-T)n_{\mathrm{env}}\bigr),
\]
where ${N_{\mathrm{out}}^{\mathrm{bos}}}(N_s)=TN_s+(1-T)n_{\mathrm{env}}$ is strictly increasing and affine in $N_s$ (since $T>0$). The bosonic entropy $g(x)=(x+1)\log_2(x+1)-x\log_2 x$ is strictly increasing on $[0,\infty)$. By composition, $g(\tau {N_{\mathrm{out}}^{\mathrm{bos}}}(N_s))$ is strictly increasing in $N_s$, while $g(\tau(1-T)n_{\mathrm{env}})$ is independent of $N_s$. Hence ${\chi_{\mathrm{comm}}^{\mathrm{coh,bos}}}(N_s,\tau)$ is strictly increasing in $N_s$. Moreover, for fixed $\tau$, the harvested energy $E_h(N_s,\tau)$ in~\eqref{eq:appE_EH} is also strictly 
increasing in $N_s$; hence every feasible point remains feasible as $N_s$ 
increases, and the feasible set $\{N_s\in[0,N_{\max}]:E_h(N_s,\tau)\ge B\}$ is  non-empty for all $B\in[0,B_{\max}^{\mathrm{bos}}]$. Since the objective is  strictly increasing given the budget constraint $N_s\le N_{\max}$, the  constrained maximum is attained uniquely at $N_s^\star=N_{\max}$. 
\hfill$\blacksquare$}

{\textbf{Lemma~D.2 (Strict monotonicity in $\tau$).}
\emph{With $N_s=N_{\max}$, the Holevo function $f(\tau):=g(\tau N_{\mathrm{out}}^{\max,\mathrm{bos}})-g(\tau(1-T)n_{\mathrm{env}})$ is strictly increasing in $\tau\in(0,1]$.}}

{\emph{Proof.} Differentiating,
\[
f'(\tau)=N_{\mathrm{out}}^{\max,\mathrm{bos}}\,g'(\tau N_{\mathrm{out}}^{\max,\mathrm{bos}})-(1-T)n_{\mathrm{env}}\,g'(\tau(1-T)n_{\mathrm{env}}),
\]
where $g'(x)=\log_2(1+1/x)$ for $x>0$. Define $\varphi(x):=xg'(x)=x\log_2(1+1/x)$. Then $f'(\tau)=\frac{1}{\tau}\!\left[\varphi(\tau N_{\rm out}^{\max,\rm bos})-\varphi(\tau(1-T)n_{\mathrm{env}})\right].$ Since
\[
\varphi'(x)=\log_2(1+1/x)-\frac{1}{(x+1)\ln 2}>0,\quad x>0
\]
(which follows from the standard inequality $\ln(1+t)>t/(1+t)$ for $t>0$), $\varphi$ is strictly increasing. Since $N_{\mathrm{out}}^{\max,\mathrm{bos}}=TN_{\max}+(1-T)n_{\mathrm{env}}>(1-T)n_{\mathrm{env}}$ for $T>0$ and $N_{\max}>0$, we have $\tau N_{\mathrm{out}}^{\max,\mathrm{bos}}>\tau(1-T)n_{\mathrm{env}}$, and by the strict monotonicity of $\varphi$, $f'(\tau)>0$ for all $\tau>0$. Therefore $f$ is strictly increasing in $\tau$. \hfill$\blacksquare$}

{\textbf{Conclusion of the proof.}
With $N_s^\star=N_{\max}$, the EH constraint reduces to
\[
\eta_h(1-\tau)N_{\mathrm{out}}^{\max,\mathrm{bos}}\ge B\;\Longleftrightarrow\;\tau\le 1-\frac{B}{\eta_h N_{\mathrm{out}}^{\max,\mathrm{bos}}}=:\tau_{\max}(B).
\]
By Lemma~D.2, $f$ is strictly increasing in $\tau$, so the constrained maximum is attained at the feasibility boundary $\tau^\star(B)=\tau_{\max}(B)=1-B/(\eta_h N_{\mathrm{out}}^{\max,\mathrm{bos}})$. {Since ${C_Q^{\mathrm{coh,bos}}}(B)=f(\tau^\star(B))$ by definition of $f(\tau):=g(\tau N_{\mathrm{out}}^{\max,\mathrm{bos}})-g(\tau(1-T)n_{\mathrm{env}})$,} this yields
\begin{equation}
{C_Q^{\mathrm{coh,bos}}}(B)=g(\tau^\star(B)\,N_{\mathrm{out}}^{\max,\mathrm{bos}})-g(\tau^\star(B)(1-T)n_{\mathrm{env}}).
\end{equation}
Strict decrease and strict concavity of ${C_Q^{\mathrm{coh,bos}}}(B)$ in $B$ follow by composition. The map $\tau^\star(B)=1-B/(\eta_h N_{\mathrm{out}}^{\max,\mathrm{bos}})$ is affine and strictly decreasing in $B$. Using $g''(x)=-1/[x(x+1)\ln 2]<0$ for all $x>0$, a direct computation gives
\[
f''(\tau)=\frac{(1-T)n_{\mathrm{env}}-N_{\mathrm{out}}^{\max,\mathrm{bos}}}{\tau\,\ln 2\,(\tau N_{\mathrm{out}}^{\max,\mathrm{bos}}+1)(\tau(1-T)n_{\mathrm{env}}+1)},
\]
which is strictly negative since $N_{\mathrm{out}}^{\max,\mathrm{bos}}>(1-T)n_{\mathrm{env}}$ (valid for $T>0$, $N_{\max}>0$). Hence $f$ is strictly concave on $(0,1]$, and by composition with the affine $\tau^\star(B)$, ${C_Q^{\mathrm{coh,bos}}}(B)$ is strictly concave and strictly decreasing in $B$. The boundary values ${C_Q^{\mathrm{coh,bos}}}(0)=g(N_{\mathrm{out}}^{\max,\mathrm{bos}})-g((1-T)n_{\mathrm{env}})$ at $\tau^\star=1$ and ${C_Q^{\mathrm{coh,bos}}}(\eta_h N_{\mathrm{out}}^{\max,\mathrm{bos}})=0$ at $\tau^\star=0$ complete the proof.}
\section{Proof of Proposition \ref{prop:bosonic_sq}}
\label{ProofOfBosSq}
{We derive the comm-port Holevo information of the displaced squeezed-state ensemble at the BS comm port, accounting for both the channel attenuation and the BS vacuum noise. Throughout the derivation, the input photon-number budget $N_s\le N_{\max}$ and the BS transmissivity $\tau\in[0,1]$ are jointly optimized in accordance with~\eqref{eq:cv_capacity_power}.}

{\textbf{Channel-output covariance (before BS).}
For a displaced squeezed state with squeezing parameter $r\ge 0$ and modulation variances $\sigma_X^2,\sigma_P^2\ge 0$, the conditional output state has quadrature variances
\begin{align}
\mathrm{Var}(\hat{X}_{\mathrm{out}}^{\mathrm{single}}) &=\tfrac12\bigl[Te^{-2r}+(1-T)(2n_{\mathrm{env}}+1)\bigr],\\
\mathrm{Var}(\hat{P}_{\mathrm{out}}^{\mathrm{single}}) &=\tfrac12\bigl[Te^{2r}+(1-T)(2n_{\mathrm{env}}+1)\bigr].
\end{align}
The displacement modulation $\alpha\sim\mathcal{CN}(0,\Sigma)$ with $\Sigma=\mathrm{diag}(\sigma_X^2,\sigma_P^2)$ is attenuated by $\sqrt{T}$ through the channel, so the average output covariance before the BS reads
\begin{align}
\mathrm{Var}(\hat{X}_{\mathrm{out}}^{\mathrm{avg}}) &=\mathrm{Var}(\hat{X}_{\mathrm{out}}^{\mathrm{single}})+T\sigma_X^2,\\
\mathrm{Var}(\hat{P}_{\mathrm{out}}^{\mathrm{avg}}) &=\mathrm{Var}(\hat{P}_{\mathrm{out}}^{\mathrm{single}})+T\sigma_P^2.
\end{align}}

{\textbf{Comm-port covariance (after BS).}
The generalized BS with transmissivity $\tau\in[0,1]$ adds vacuum noise $(1-\tau)/2$ per quadrature. The comm-port quadrature variances are accordingly
\begin{align}
V_X^{\mathrm{single}} &= \tau\,\mathrm{Var}(\hat{X}_{\mathrm{out}}^{\mathrm{single}})+\tfrac{1-\tau}{2},\\
V_P^{\mathrm{single}} &= \tau\,\mathrm{Var}(\hat{P}_{\mathrm{out}}^{\mathrm{single}})+\tfrac{1-\tau}{2},\\
V_X^{\mathrm{avg}} &= V_X^{\mathrm{single}}+\tau T\sigma_X^2,\\
V_P^{\mathrm{avg}} &= V_P^{\mathrm{single}}+\tau T\sigma_P^2.
\end{align}
The factor $\tau T$ in the last two lines is critical and arises because the displacement is attenuated by $\sqrt{T}$ through the channel and then by $\sqrt{\tau}$ at the BS.}

{\textbf{Comm-port Holevo information.}
The symplectic eigenvalues are $\nu_{\mathrm{single}}=\sqrt{V_X^{\mathrm{single}}V_P^{\mathrm{single}}}$ and $\nu_{\mathrm{avg}}=\sqrt{V_X^{\mathrm{avg}}V_P^{\mathrm{avg}}}$. The comm-port Holevo information is
\begin{equation}
{\chi_{\mathrm{comm}}^{\mathrm{sq,bos}}}(r,\sigma_X^2,\sigma_P^2,\tau)=g(\nu_{\mathrm{avg}}-\tfrac12)-g(\nu_{\mathrm{single}}-\tfrac12).
\label{eq:appF_chi_sq}
\end{equation}
The total mean input photon number is $N_s=\sinh^2(r)+(\sigma_X^2+\sigma_P^2)/2$, and the EH constraint at the EH port reads
\begin{equation}
\eta_h(1-\tau){N_{\mathrm{out}}^{\mathrm{bos}}}(N_s)\ge B.
\label{eq:appF_EH}
\end{equation}}

{
Maximizing~\eqref{eq:appF_chi_sq} over $(r,\sigma_X^2,\sigma_P^2,\tau)$ subject to $N_s\le N_{\max}$ and~\eqref{eq:appF_EH} yields the achievable-rate-power function in~\eqref{eq:cv_sq_optim}. Since this objective does not admit a closed form for $r>0$, the optimization is performed numerically.}

\section{Proof of Proposition 7}
\label{ProofOfAGNSQ}
{We adapt the derivation of Appendix~\ref{ProofOfBosSq} to the AGN channel, which is the limiting case of the lossy bosonic channel with $T=1$ and additive noise variance $N_0$. The output mode $\hat{a}_{\mathrm{out}}$ enters the generalized BS receiver of Section~IV-A with adjustable transmissivity $\tau\in[0,1]$.} We consider a squeezed vacuum state as described in \eqref{eqSqrho} drawn from a zero mean Gaussian distribution with quadrature variances $\sigma_X^2$ and $\sigma_P^2$. The total mean photon number of the input ensemble is given by
\begin{equation}
N_s = \sinh^2(r) + \frac{1}{2} (\sigma_X^2 + \sigma_P^2).
\end{equation}
The AGN channel adds phase-insensitive Gaussian noise of variance $N_0$ to both quadratures, so that the conditional output quadrature variances (before the BS) are
\begin{align}
\mathrm{Var}(\hat{X}_{\mathrm{out}}^{\mathrm{single}}) &= \tfrac{1}{2} e^{-2r} + N_0, \\
\mathrm{Var}(\hat{P}_{\mathrm{out}}^{\mathrm{single}}) &= \tfrac{1}{2} e^{2r} + N_0.
\end{align}
The displacement modulation $\alpha\sim\mathcal{CN}(0,\mathrm{diag}(\sigma_X^2,\sigma_P^2))$ is not attenuated in the AGN channel, so the average output covariance before the BS is obtained by adding $\sigma_X^2$ and $\sigma_P^2$ to the corresponding conditional variances. {The generalized BS with transmissivity $\tau$ adds vacuum noise $(1-\tau)/2$ per quadrature, yielding the comm-port quadrature variances
\begin{align}
V_X^{\mathrm{single}} &= \tau\,\mathrm{Var}(\hat{X}_{\mathrm{out}}^{\mathrm{single}})+\tfrac{1-\tau}{2},\\
V_P^{\mathrm{single}} &= \tau\,\mathrm{Var}(\hat{P}_{\mathrm{out}}^{\mathrm{single}})+\tfrac{1-\tau}{2},\\
V_X^{\mathrm{avg}} &= V_X^{\mathrm{single}}+\tau\sigma_X^2,\\
V_P^{\mathrm{avg}} &= V_P^{\mathrm{single}}+\tau\sigma_P^2,
\end{align}
where the multiplicative factor $\tau$ on the modulation variances reflects the BS attenuation of the displacement (no channel attenuation, since $T=1$ in the AGN model).} The conditional output state is a mixed Gaussian state with comm-port symplectic eigenvalue
\begin{equation}
\nu_{\text{single}} = \sqrt{V_X^{\mathrm{single}}\cdot V_P^{\mathrm{single}}}.
\end{equation}
The average comm-port covariance has symplectic eigenvalue
\begin{equation}
\nu_{\text{avg}} = \sqrt{V_X^{\mathrm{avg}}\cdot V_P^{\mathrm{avg}}}.
\end{equation}
{The comm-port Holevo information of the displaced squeezed-state ensemble at the BS comm port is
\begin{equation}
\chi_{\mathrm{comm}}^{\mathrm{sq,AGN}}(r,\sigma_X^2,\sigma_P^2,\tau)=g(\nu_{\text{avg}}-\tfrac12)-g(\nu_{\text{single}}-\tfrac12).
\end{equation}
The EH constraint at the EH port reads
\begin{equation}
\eta_h(1-\tau)(N_s+N_0)\ge B,
\label{eq:appG_EH}
\end{equation}
and the achievable-rate-power function is obtained by maximizing $\chi_{\mathrm{comm}}^{\mathrm{sq,AGN}}$ over $(r,\sigma_X^2,\sigma_P^2,\tau)$ subject to $N_s\le N_{\max}$ and~\eqref{eq:appG_EH}. As in the lossy bosonic case (Appendix~\ref{ProofOfBosSq}), the optimization does not admit a closed form for $r>0$ and is performed numerically; numerical results consistently yield $r^\star=0$ within the displaced Gaussian encoding class.}
 \end{appendices}

\end{document}